\newtheorem{theorem}{Theorem}[section]
\newtheorem*{theorem*}{Theorem}
\newtheorem{proposition}[theorem]{Proposition}
\newtheorem*{proposition*}{Proposition}
\newtheorem{lemma}[theorem]{Lemma}
\newtheorem{remark}{Remark}[theorem]
\newtheorem{definition}[theorem]{Definition}
\newtheorem*{notation}{Notation}
\newtheorem{assumption}[theorem]{Assumption}
\newtheorem*{example*}{Example}
\DeclareMathOperator{\tr}{Tr}
\newcommand{\C}{\mathbb{C}}
\DeclareMathOperator{\Cl}{Cl}
\renewcommand{\epsilon}{\varepsilon}
\newcommand{\N}{\mathbb{N}}
\DeclareMathOperator{\Tr}{Tr}
\newcommand{\R}{\mathbb{R}}
\newcommand{\weakto}{\xrightarrow{\text{weak}}}
\newcommand{\into}{\hookrightarrow}
\renewcommand{\limsup}{\overline{\lim}}
\renewcommand{\liminf}{\underline{\lim}}
\DeclareMathOperator{\PV}{P.V.}
\DeclareMathOperator{\supp}{supp}
\DeclareMathOperator{\spec}{spec}
\DeclareMathOperator{\vol}{vol}
\newcommand{\Sym}{\mathfrak{S}}
\newcommand{\Per}{\mathfrak{S}'}
\author{Masoud Khalkhali, Nathan Pagliaroli, and Luuk S. Verhoeven\\
	Department of Mathematics,  Western University\\
	London, Ontario, Canada\footnote{\emph{Email addresses}:  masoud@uwo.ca, npagliar@uwo.ca,  and luuksv@gmail.com }}
\title{Large $N$ limit of fuzzy geometries coupled to fermions}
\begin{document}

\maketitle

\begin{abstract}
  In this paper we present an analysis of the large $N$ limit of a family of quartic Dirac ensembles based on $(0, 1)$ fuzzy geometries that are coupled to fermions.
	These Dirac ensembles are examples of single-matrix, multi-trace matrix ensembles. Additionally, they serve as examples of integer-valued $\beta$-ensembles. Convergence of the spectral density in the large $N$ limit for a large class of such matrix ensembles is proven, improving on existing results.
	The main results of this paper are the addition of the fermionic contribution in the matrix ensemble and the investigation of spectral estimators for finite dimensional spectral triples.
\end{abstract}

\tableofcontents

\section{Introduction}

In \cite{barrett2015matrix},  John Barrett proposed a class of toy models of Euclidean quantum
gravity built around fuzzy geometries. These models were  studied numerically in \cite{barrett2016monte,glaser2017scaling,perez2021multimatrix,barrett2019spectral,hessam2022bootstrapping,d2022numerical} and then were analytically studied in \cite{khalkhali2020phase,khalkhali2022spectral,azarfar2019random,perez2022computing,khalkhali2023coloured,hessam2022double}. For a recent survey see \cite{hessam2022noncommutative}. The core idea is to replace Riemannian metrics by their noncommutative analogue, which according to Connes' distance formula \cite{connes1994noncommutative} can be defined using a Dirac operator. A general setting for such an approach is the theory of finite spectral triples \cite{krajewski1998classification,connes1995noncommutative,barrett2015matrix}. We refer to a probability distribution on a set of finite spectral triples as a Dirac ensemble. Path integrals over such spectral triples are typically convergent. In particular, we are interested in a class of finite spectral triples called \textit{fuzzy geometries} or fuzzy spectral triples whose path integrals are matrix integrals. Our interest in these models is partly due to the fact that these models act as a bridge between Noncommutative Geometry and Random Matrix Theory. Other such connections can found in \cite{van2022one,wulkenhaar2019quantum,branahl2022scalar,Gesteau2024dhj}

In this paper we study a certain class of fermionic fuzzy Dirac ensembles. These are Dirac ensembles based on fuzzy geometries  that are coupled to fermions in the sense of \cite{barrett2024fermion}. 
In particular we study the type $(0, 1)$-fuzzy geometries $(M_{N}(\mathbb{C}), M_{N}(\mathbb{C}),D)$ where $D = [H,\cdot]$, and $H$ is a Hermitian $N \times N$ matrix.  The Dirac ensemble of interest in this paper is a Hermitian single-matrix multi-tracial ensemble of the form
\[
	Z_{\text{fuzzy}} = \int e^{- g_{4}\tr D^{4} - g_{2}\tr D^{2}} dD = \int_{\mathcal{H}_N} e^{-\Tr\left(g_4\left(H \otimes 1 - 1 \otimes H\right)^4 - g_2\left(H \otimes 1 - 1 \otimes H\right)^2\right)} dH.
\]
The addition of a  fermion introduces a fermionic contribution to the action of the form
\begin{align*}
	Z_{D,\psi} &= \int  \exp\left(-\Tr\left(g_4 D^4 + g_2 D^2\right) - \langle \psi, D\psi \rangle\right) dD d\psi.
\end{align*}
 Using Weyl's integration formula, this integral can be reduced to an $N$-dimensional integral, and the fermionic contribution can be absorbed into the action as a term similar to $\sum_{i,j} \log (\lambda_i - \lambda_j)$, where the sum is over the eigenvalues of $H$. As we will see in section \ref{sec:the_Dirac_ensembles}, to get non-trivial results, this action needs to be regularized by adding a mass term.

It is well-known in the literature \cite{johansson1998fluctuations,deift1999orthogonal} that under reasonable conditions on single-matrix single-trace ensembles there exists a continuous, compactly supported probability measure $\mu_E$ such that, for bounded functions $f$,
\[
	\lim_{N\to\infty} \langle \frac{1}{N} \tr\left(f(H)\right) \rangle = \int_{\R} f(x) d\mu_E.
\]
This $\mu_E$ is called the large $N$ spectral density, or spectral density for brevity, of the ensemble. This $\mu_E$ is also the measure minimizing an energy function corresponding to the ensemble, and in this guise it is often called the equilibrium measure.
We improve on these results to also cover the multi-tracial, non-polynomial model we encounter as our fermionic fuzzy Dirac ensemble.

With these results on random matrix ensembles established we investigate the fermionic fuzzy Dirac ensembles.
Because the large $N$ limit of the spectral density can only be used for normalized traces we do not obtain access to the limit of the heat kernel, but only the limit of the normalized heat kernel
\[
	k_{D^2}(t) := \frac{1}{\dim D} \tr\left(e^{tD^2}\right).
\]
This prevents us from obtaining a dimension or volume from the spectral density with the usual heat kernel techniques.
Instead we use the \emph{spectral dimension} and \emph{spectral variance} as proxies for the dimension. These quantities that measure the dimension of a spectrum were developed in \cite{barrett2019spectral} for the context of finite fuzzy geometries such as those appearing in the Monte-Carlo simulations of \cite{barrett2016monte}.
These results can be found in Section \ref{sec:spectral_properties}.

This paper is organized as follows. In Section \ref{subsec:fuzzy_geometries} we recall the details of fuzzy geometries, which we then use to define the $(0, 1)$-fuzzy Dirac ensemble in Section \ref{subsec:quartic_assoc_matrix_model}. Then in Section \ref{subsec:fermionic_model} we  incorporate the fermionic action.

Section \ref{sec:the_spectral_density} is dedicated to the existence of the spectral density for multi-tracial matrix ensembles. The key ingredients and main results are discussed in Section \ref{subsec:existence_of_eq_measure}, with the technical details in Section \ref{subsec:proof_that_eq_measure_is_spec_density}. In Section \ref{subsec:computation_of_eq_measure} we briefly discuss the computation of the equilibrium measure.

Finally in Section \ref{subsec:estimators_massless} we discuss properties such as the spectral dimension and the spectral phase transition of the fermionic enemble for massless fermions.
Then in Section \ref{subsec:estimators_non_zero_mass} we examine the influence of the mass of the fermion on these various properties.

\subsection*{Acknowledgements}

	We would like to thank John Barrett and L Glaser for useful discussions regarding  the various aspects of the fermionic model. We would like to thank the Natural Sciences and Engineering  Research Council of Canada (NSERC) for financial support.

\section{The Dirac ensembles}
\label{sec:the_Dirac_ensembles}

	We start this section by giving a very brief and targeted introduction to spectral triples and matrix ensembles. For much more complete introductions to these concepts, see for example \cite{gracia2013elements,van2015noncommutative,eynardRandomMatrices,deift1999orthogonal}.

	Spectral triples \cite{connes1995noncommutative,connes2019noncommutative} are noncommutative analogues of Riemannian spin manifolds.   In their simplest form they consist of a triple
	\[
		\left(A, H, D\right)
	\]
	where $A$ is some $*-$algebra, $H$ is a Hilbert space equipped with a representation of $A$, and $D$ is a, usually unbounded, operator on $H$.
	This triple is further required to satisfy several analytical requirements, but if $H$ is finite dimensional as it will be throughout this paper, these requirements are automatically satisfied.
	The spectral triple corresponding to a Riemannian manifold $M$ with spinor bundle $S \to M$ is given by
	\[
		\left(C^\infty(M), L^2(S \to M), D_M\right),
	\]

	where $D_{M}$ is the Dirac operator. Often further structure on the spectral triples is desired. In our case, this is a real structure and a grading, which 
 turns the spectral triple into a quintuple
	\[
		\left(A, H, D; J, \Gamma\right),
	\]
	where the real structure $J$ is an anti-unitary map with $J^2 = \pm 1$ and the grading $\Gamma$ is a linear map with $\Gamma^2 = 1$.
	These structures further require $\Gamma D = -D \Gamma$, $DJ = \pm JD$ and $\Gamma J = \pm J \Gamma$.
	The three optional signs appearing in these conditions together determine the $KO$-dimension of the spectral triple by Table \ref{tab:KO_dimension}.
	Additionally we demand the order-zero condition
	\[
		\left[a, J^{-1}b^*J\right] = 0,
	\]
	and order-one condition
	\[
		\left[\left[D, a\right], J^{-1}b^*J\right] = 0.
	\]
	These conditions imply that $J$ can be used to construct a right representation of $A$ on $H$ that commutes with the left representation, and that the action of ``derivatives'' $[D, a]$ is on the left.

	\begin{table}[h]
		\centering
		\begin{tabular}{r | c c c c c c c c}
			$s \mod 8$ 		& 0 & 1 & 2 & 3 & 4 & 5 & 6 & 7 \\
			\hline
			$\epsilon$		& $+$ & $+$ & $-$ & $-$ & $-$ & $-$ & $+$ & $+$ \\
			$\epsilon'$		& $+$ & $-$ & $+$ & $+$ & $+$ & $-$ & $+$ & $+$ \\
			$\epsilon''$	& $+$ & $+$ & $-$ & $+$ & $+$ & $+$ & $-$ & $+$
		\end{tabular}
		\caption{The usual choices of the signs associated to a $KO$-dimension $s$ in noncommutative geometry \cite{connes1995noncommutative,connes2019noncommutative}. The signs are determined by $J^2 = \epsilon$, $JD = \epsilon'DJ$ and $J\Gamma = \epsilon''\Gamma J$. If the $KO$-dimension is odd the grading $\Gamma$ is set to be trivial.}
		\label{tab:KO_dimension}
	\end{table}

	Given the algebra, Hilbert space, real structure and grading of a spectral triple there is a real vector space of operators that can play the role of a Dirac operator completing the spectral triple.
	By a Dirac ensemble we mean a probability distribution on a subset of the space of possible Dirac operators for a given algebra, Hilbert space, and possibly additional structures.
	In this paper we will consider a specific Dirac ensemble, based on a specific spectral triple.
	These are chosen such that the corresponding matrix ensemble has an equilibrium measure that can be found by the Coulomb gas technique.
	In the next section we introduce a wider context for fuzzy geometries. 	We are hopeful that in the future wider classes of Dirac ensembles can be opened up to analytic  investigation. This shall be elaborated on further in the discussion of Section \ref{sec:discussion}.
	
	\subsection{Fuzzy geometries}
	\label{subsec:fuzzy_geometries}

		The terminology ``fuzzy geometry'' was introduced in \cite{barrett2015matrix}. It refers to the class of spectral triples in which the algebra is $M_N(\C)$, for some $N$, and the Hilbert space consists of an irreducible Clifford module tensored with $M_N(\C)$, representing a trivial spinor bundle over the function algebra.
		The motivation for the name comes from the examples of the fuzzy sphere and fuzzy torus \cite{dandrea2013metric,barrett2019torus,madore1992fuzzy}, which are examples of such spectral triples.

		\begin{definition}
			A \emph{fuzzy geometry} of signature $(p, q)$ is a real spectral triple of $KO$-dimension $q-p$ of the form
			$$
				\left(M_N(\C), M_N(\C) \otimes V, D; J = 1 \otimes J_V\right)
			$$
			where $V$ is an irreducible Clifford module for the Clifford algebra $\Cl_{p, q}$ with real structure $J_V$.
			If $p + q$ is even, the fuzzy geometry is an even real spectral triple with grading $\Gamma = 1 \otimes \Gamma_V$.
		\end{definition}

		The real structure, as implied by the order one condition on real spectral triples, restricts the form of the Dirac operator as follows.

		\begin{theorem}[Barrett \cite{barrett2015matrix}]
			\label{thm:Barrett_fuzzy_Diracs}
			The Dirac operator of a signature $(p, q)$ fuzzy geometry can be written as
			$$
				D(A \otimes \psi) = \sum_{I} (K_I A + \epsilon' A K_I^*) \otimes \gamma_I \psi.
			$$
			Here the sum ranges over certain subsets $I \subset \{1, \ldots, p+q\}$ and $\gamma_I = \prod_{i \in I} \gamma_i$. 
			The $K_I$ are matrices that are either skew- or self-adjoint but otherwise arbitrary.
		\end{theorem}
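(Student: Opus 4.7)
The plan is to decompose $D$ along a basis of $\End(V)$ provided by the Clifford algebra, then systematically impose the compatibility conditions with $J$, the order-one condition, and self-adjointness.

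First I would use that since $V$ is an irreducible $\Cl_{p,q}$-module, the products $\gamma_I = \prod_{i \in I} \gamma_i$ (for $I \subset \{1,\dots,p+q\}$, suitably indexed) span $\End(V)$, so any linear operator on $M_N(\C) \otimes V$ commuting with no extra structure can be written uniquely as $D = \sum_I T_I \otimes \gamma_I$ for some linear maps $T_I : M_N(\C) \to M_N(\C)$. The goal is then to show that each $T_I$ is of the form $T_I(A) = K_I A + \epsilon' A K_I^*$.

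Next I would invoke the order-zero and order-one conditions. Because $J = 1 \otimes J_V$, the right action $J^{-1}b^* J$ on $M_N(\C) \otimes V$ is simply $A \otimes \psi \mapsto A b \otimes J_V^{-1} \psi$ (up to the antilinear twist that can be absorbed into $b$), so the order-zero condition reduces to the associativity of matrix multiplication. The order-one condition $[[D,a], J^{-1} b^* J] = 0$ applied component-by-component in the $\gamma_I$-decomposition forces each commutator $[T_I, L_a]$ (with $L_a$ left multiplication by $a$) to commute with all right multiplications, which for maps on $M_N(\C)$ means $T_I$ itself must lie in the span of left and right multiplications. Hence $T_I(A) = K_I A + A L_I$ for some matrices $K_I, L_I$.

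Then I would impose the reality condition $D J = \epsilon' J D$. Computing both sides of this relation on a simple tensor $A \otimes \psi$ and separating by Clifford components (using how each $\gamma_I$ either commutes or anticommutes with $J_V$, as determined by $\epsilon'$ and $\epsilon''$ via Table \ref{tab:KO_dimension}) forces $L_I = \epsilon' K_I^*$ for those $I$ with the correct sign compatibility, while killing the $T_I$ with the wrong parity. This is precisely how the ``certain subsets $I$'' in the statement get selected. Finally, self-adjointness of $D$, combined with the fact that each $\gamma_I$ is itself either Hermitian or anti-Hermitian (depending on the parity of $|I|$ and the signature), forces $K_I$ to be self-adjoint or skew-adjoint accordingly.

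The main obstacle, and the part requiring the most careful bookkeeping, is the sign analysis in the second step above: tracking how the signs $\epsilon, \epsilon', \epsilon''$ propagate through $\gamma_I J_V = \pm J_V \gamma_I$ and how this, together with the self-adjointness parities of the $\gamma_I$, determines exactly which subsets $I$ contribute and whether the corresponding $K_I$ must be self- or skew-adjoint. A clean way to organize this is to treat each Clifford component independently and use the known signature-$(p,q)$ structure constants of the Clifford algebra, reducing the argument to an essentially combinatorial verification.
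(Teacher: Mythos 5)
First, a point of reference: the paper does not prove this statement at all --- it is quoted from Barrett's paper \cite{barrett2015matrix}, and the authors explicitly defer the case analysis of which subsets $I$ occur and which $K_I$ are self- versus skew-adjoint to that reference. So your proposal can only be measured against Barrett's original argument, and in broad outline it does follow the same strategy: expand $D$ over the Clifford basis of $\End(V)$, use the order-one condition to force each coefficient operator on $M_N(\C)$ into the span of left and right multiplications, and then use $J$ and self-adjointness to pin down the coefficients.

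That said, the sketch has concrete gaps. (1) You never invoke the grading condition $\Gamma D = -D\Gamma$. In the even case $\Gamma = 1 \otimes \Gamma_V$ with $\Gamma_V$ the Clifford chirality element, and it is this condition --- not the reality condition --- that eliminates the components with the wrong parity of $|I|$. The relation $DJ = \epsilon' JD$, worked out on a component $T_I \otimes \gamma_I$ with $T_I(A) = K_I A + A L_I$ and using $\gamma_I J_V = \pm J_V \gamma_I$, only produces the identification $L_I = \pm\epsilon' K_I^*$; it does not kill any component, so your claim that it does the selecting is where the argument would actually fail. (2) Your uniqueness claim for the decomposition $D = \sum_I T_I \otimes \gamma_I$ is false when $p+q$ is odd: there $\dim \End(V) = 2^{p+q-1}$ while there are $2^{p+q}$ subsets, and $\gamma_I$ is proportional to $\gamma_{I^c}$, so one must first fix a sub-collection of subsets; this is part of how the ``certain subsets'' arise. (3) The step from ``$[T_I, L_a]$ commutes with all right multiplications'' to ``$T_I$ is a sum of a left and a right multiplication'' is not immediate: one must note that $[T_I, L_a] = L_{c(a)}$ with $a \mapsto c(a)$ a derivation of $M_N(\C)$, hence inner, $c(a) = [K_I, a]$, after which $T_I - L_{K_I}$ commutes with the left regular representation and is therefore a right multiplication. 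These are all repairable, and the remaining sign bookkeeping via Table \ref{tab:KO_dimension} is indeed the combinatorial verification you anticipate, but as written the proposal misattributes the mechanism that selects the subsets $I$ and omits the odd-dimensional subtlety.
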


		For particular signatures $(p, q)$, there are more details available about which subsets $I$ are allowed in the sum and which $K_I$ are skew- and which are self-adjoint.
		To avoid filling the current paper with the various possible cases, we refer for these details to \cite{barrett2015matrix}.
		This theorem implies that a Dirac ensemble of fuzzy geometries yields a \emph{Hermitian matrix ensemble} by parametrizing $D$ in terms of the $K_I$ and $i K_I$ where appropriate.

		In this the paper we want to highlight the fuzzy geometries of signature $(0, 1)$. These take the form
		$$
			\left(M_N(\C), M_N(\C), D = [H, \cdot]; J = \cdot^*\right)
		$$
		where $H$ is an arbitrary self-adjoint matrix.
		The space $\mathcal{D}_N$ of all possible Dirac operators can thus be described as $\mathcal{D}_N = \left\{[H, \cdot]\right\}_{H \in \mathcal{H}_N}$.
		Dirac ensembles based on these fuzzy geometries thus yield a \emph{single matrix} Hermitian matrix ensemble. For such random matrix ensembles various tools exist  to investigate the average spectrum of $H$, and thus also the average spectrum of $D$.

	\subsection{The quartic \texorpdfstring{$(0, 1)$}{(0, 1)}-fuzzy Dirac ensemble}
	\label{subsec:quartic_assoc_matrix_model}

		Before introducing our Dirac ensemble of interest we give a brief introduction to matrix ensembles. A proper introduction can, for example, be found in \cite{deift1999orthogonal,eynardRandomMatrices,guionnet2009large}.
		A matrix ensemble consists, in essence, of a collection of matrices together with a probability distribution. One is then interested in the probability distribution of properties of these matrices, such as the smallest eigenvalue or its trace.
		One of the most famous examples of a matrix ensemble is the \emph{Gaussian Unitary Ensemble}, given by the space of $N \times N$ Hermitian matrices $\mathcal{H}_N$ with probability density
		\[
			p(H) = \frac{1}{Z} e^{-\Tr(H^2)} dH,
		\]
		where $dH$ is the Lebesgue measure on $\mathcal{H}_N \cong \R^{N^2}$ (in any basis) and
		\[
			Z = \int_{\mathcal{H}_N} e^{-\Tr(H^2)} dH
		\]
		is the normalization factor computed with the same Lebesgue measure.

		In general, matrix ensembles can have multiple matrices as dynamical variables, such as in the $ABAB$-model \cite{kazakov1999two}
		\[
			p(A, B) = \frac{1}{Z} e^{- g_4\left(\Tr(A^4) + \Tr(B^4)\right) - g_{ABAB} \Tr(ABAB) } dA dB
		\]
		where $A, B \in \mathcal{H}_N$, $Z$ is again the normalization factor and the coefficients $g_4$ and $g_{ABAB}$ are model parameters called \emph{coupling constants}.
		We restrict ourselves to single-matrix ensembles since that is the setting in which the spectral density is available.

		By a Dirac ensemble we then mean a collection of Dirac operators completing some otherwise fixed spectral triple, together with a probability distribution on these Dirac operators.
		When considering fuzzy geometries as the spectral triple, such a Dirac ensemble takes the form of a matrix ensemble, by Theorem \ref{thm:Barrett_fuzzy_Diracs} from \cite{barrett2015matrix}.
		Since matrix ensembles have a significantly longer history than Dirac ensembles, this gives us access to many tools.

		We define the \emph{quartic $(0, 1)$-fuzzy Dirac ensemble} to have as space of Dirac operators $\mathcal{D}_N = \left\{[H, \cdot]\right\}_{H \in \mathcal{H}_N}$ with probability density
		$$
			p(D) = \frac{1}{Z_D} \exp\left(-\Tr\left(g_4 D^4 + g_2 D^2\right)\right) dD.
		$$
		Here $dD$ is the Lebesgue measure on $\mathcal{D} \subset \mathcal{H}_{N^2} \cong \R^{N^2 \times N^2}$ and
		$$
			Z_D = \int_{\mathcal{D}} \exp\left(-\Tr\left(g_4 D^4 + g_2 D^2\right)\right) dD
		$$
		is the normalization factor.
		The fact that $Z$ is defined with respect to the same $dD$ also ensures independence from the exact identification of $\mathcal{D}$ as a subset of the $N^2 \times N^2$ Hermitian matrices.

		There is one hurdle in defining the associated single matrix ensemble, the parametrization $\mathcal{H}_N \ni H \mapsto D = [H, \cdot] \in \mathcal{D}_N$ is not injective.
		Hence the map $\mathcal{H}_N \to \mathcal{D}_N$ is not proper, so that we cannot naively pull back a probability density from $\mathcal{D}_N$ to $\mathcal{H}_N$.
		This can be solved by equipping each fiber $\left\{H = H_0 + tI\right\}_{t \in \R}$ with the probability density $\sqrt{\frac{a}{\pi}}e^{-a\Tr(H)^2}$.
		
		This may seem overly complicated compared to simply restricting the parametrization to traceless Hermitian matrices, which would correspond to putting a Dirac delta as a probability density on the fibers. 
		The reason for our choice is to ensure that the induced density on $\mathcal{H}_N$ is absolutely continuous with respect to the Lebesgue measure on $\mathcal{H}_N$, and that it can be written as a multi-tracial model of fixed order.
		All conclusions are independent of the non-physical parameter $a$.

		Given the preceding discussion, we define the quartic $(0, 1)$-fuzzy Dirac ensemble as the following single-matrix, multi-trace Hermitian matrix ensemble:
		$$
			p(H) = \frac{1}{Z_H} \exp\left(-\Tr\left(g_4 D_H^4 + g_2 D_H^2\right)\right) \exp\left(-a \Tr(H)^2\right) dH
		$$
		$$
			Z_{H} = \int_{\mathcal{H}_N} \exp\left(-\Tr\left(g_4 D_H^4 + g_2 D_H^2\right)\right) \exp\left(-a \Tr(H)^2\right) dH,
		$$
		where $dH$ is the Lebesgue measure on $\mathcal{H}_N \cong \R^{N^2}$.
		We will use spectral density techniques from random matrix theory to investigate the random fuzzy geometry defined by the Dirac operator $D_H$, both in this model and models further incorporating the fermionic action.

	\subsection{The fermionic quartic \texorpdfstring{$(0, 1)$}{(0, 1)}-fuzzy Dirac ensemble}
	\label{subsec:fermionic_model}

		Dirac ensembles based on fuzzy geometries have been studied before \cite{barrett2019spectral,barrett2016monte,khalkhali2020phase,khalkhali2022spectral, perez2022computing,perez2022multimatrix}.
		One of our goals in this paper is to incorporate the fermionic action from the noncommutative geometry approach to particle physics \cite{barrett2024fermion}.
		Our initial goal will be to simply investigate the effect that the presence of a fermionic action has on a Dirac ensemble. In particular, we will not yet consider observables of the fermionic variables.
		This allows us to consider the fermionic action as an additional term in the action.

		As our fermionic space we take the Grassmann algebra generated by the Hilbert space of our spectral triple, $\mathcal{F}_N = \wedge M_N(\C)$.
		Integration over $\mathcal{F}_N$ is therefore Grassmann, or Berezin, integration.
		The spectral action principle \cite{chamseddine1997spectral} tells us that the probability density on the full space $\mathcal{D}_N \times \mathcal{F}_N$, where $\mathcal{F}_N$ is the space of fermions, should be of the form
		$$
			p(D, \psi) = \frac{1}{Z_{D,\psi}} \exp\left(-\Tr\left(g_4 D^4 + g_2 D^2\right) - \langle \psi, D\psi \rangle\right) dD d\psi,
		$$
		if we choose a quartic polynomial as our bosonic action.
		By not considering $\psi$ as an observable of interest we can integrate out $\psi$ to obtain a new probability density on $\mathcal{D}_N$ given by
		\begin{align}
			p(D) & = \frac{1}{Z_{D,\psi}} \exp\left(-\Tr\left(g_4 D^4 + g_2 D^2\right)\right) \int_{\mathcal{F}_N} \exp\left(-\langle \psi, D\psi \rangle\right) d\psi dD, \nonumber \\
			& = \frac{1}{Z_{D,\psi}} \exp\left(-\Tr\left(g_4 D^4 + g_2 D^2\right) + \log\left(F(D)\right)\right) dD,
			\label{eq:fermionic_density_nomass}
		\end{align}
		where
		\begin{equation}
			F(D) = \int_{\mathcal{F}_N} \exp\left(-\langle \psi, D \psi\rangle\right) d\psi.
			\label{eq:base_fermionic_action}
		\end{equation}

		From standard results on Grassmann integration we get $F(D) = \det(D)$. This brings us to a hurdle since $D = [H, \cdot]$ has a nontrivial kernel, so $F(D) \equiv 0$.
		This means that Equation \ref{eq:fermionic_density_nomass} does not define a probability density; in fact, it is ill-defined since $Z_{D,\psi}$ will be 0 as well.
		Together these two zeroes suggest that some form of regularization can be used here. The route we take is the addition of a mass term to the Dirac operator $D$.

		In order to add a mass term to the Dirac operator we need to change the spectral triple. Naively taking $D + m$, for $m \neq 0$, is not a valid Dirac operator on the $(0, 1)$-fuzzy geometry because it violates the requirement $JD = -DJ$, while $D + im$ is not self-adjoint.
		We again take inspiration from the noncommutative geometry approach to particle physics where the full spectral triple is a product of a geometric (commutative) spectral triple and a finite spectral triple containing information about the particles \cite{ChamsedinneConnesMarcolli_ImprovedStandardModel}.

		As finite space we take the spectral triple 
		\begin{equation}
			(\C, \C, m; J = \overline{\cdot})
			\label{eq:finite_spectral_triple}
		\end{equation}
		of $KO$-dimension 7 so that the product with our geometric $(0, 1)$-fuzzy geometry becomes $KO$-dimension 0, which is given by
		$$
			\left(M_N(\C), M_N(\C) \otimes \C^2, D = D_{fuzzy} \otimes \sigma_1 + m \otimes \sigma_2; J, \Gamma = 1 \otimes \sigma_3 \right),
		$$
		where the $\sigma_i$ are Pauli matrices, $\Gamma$ is the grading of the now even spectral triple and $J(A \otimes v) = A^* \otimes \sigma_3 \overline{v}$ \cite{farnsworth2017productsoftriples}.
		We will briefly discuss more complex finite spaces in the discussion at the end of the paper.

		The addition of the mass $m \neq 0$ changes the above computation of the fermionic action.
		A straightforward exercise in linear algebra shows that the spectrum of the Dirac operator of a $(0, 1)$-fuzzy geometry, $D_{fuzzy}$, is given by
		\begin{equation}
			\spec(D_{fuzzy}) = \left\{\lambda_i - \lambda_j \,\middle|\, \lambda_i, \lambda_j \in \spec(H)\right\}
			\label{eq:spectrum_of_fuzzy_D_from_H},
		\end{equation}
		and after the addition of the finite spectral triple we obtain
		\begin{equation}
			\spec(D) = \left\{\pm \sqrt{m^2 + (\lambda_i - \lambda_j)^2} \,\middle|\, \lambda_i, \lambda_j \in \spec(H)\right\}
			\label{eq:spectrum_of_D_with_mass_from_H},		
		\end{equation}
		where each \emph{ordered pair} of eigenvalues of $H$ appears once, leading to a duplication of eigenvalues.
		The duplication of eigenvalues we find is reminiscent of the fermion doubling problem previously encountered in the particle physics models of noncommutative geometry \cite{ChamsedinneConnesMarcolli_ImprovedStandardModel,Barrett_FermionDoubling}, although our context is different.
		This does lead us to consider both the ``default'' fermionic action $\langle \psi, D \psi \rangle$ on the complex Grassmann algebra as well as the ``reduced'' fermionic action $\frac{1}{2}\langle J\psi, D\psi \rangle$ on the real Grassmann algebra.

		The fermionic space $\mathcal{F}_N$ is now the Grassmann algebra generated by $M_N(\C) \otimes \C^2$ and the fermionic contribution to the action is now
		$$
			F(D) = \det(D) \propto \prod_{i, j} \left(m^2 + (\lambda_i - \lambda_j)^2\right)
		$$
		which is indeed non-zero.
		The ``reduced'' fermionic action can also be computed, again using standard results on Grassmann integration, to be
		\begin{equation}
			F(D) = \int_{\mathcal{F}_N} \exp\left(-\frac{1}{2}\langle J\psi, D\psi \rangle\right) d\psi \propto m^{2N} \prod_{i < j} \left(m^2 + (\lambda_i - \lambda_j)^2\right).
			\label{eq:reduced_fermionic_action}
		\end{equation}
		
		Finally, we arrive at the fermionic $(0, 1)$-fuzzy Dirac ensemble. It is a space of Dirac operators
		$$
			\mathcal{D}_N = \left\{[H, \cdot] \otimes \sigma_1 + m \otimes \sigma_2\right\}_{H \in \mathcal{H}_N}
		$$
		equipped with probability density
		$$
			p(D) = \frac{1}{Z_{D,\psi}} \exp\left(-\Tr\left(g_4 D^4 + g_2 D^2\right) + \frac{\beta_2}{4}\sum_{i, j} \log\left(m^2 + (\lambda_i - \lambda_j)^2\right)\right).
		$$
		The sum is over the eigenvalues of $H$ and corresponds to the fermionic action $\log(F(D))$. 
		For the ``default'' fermionic action $\beta_2 = 2$, while for the ``reduced'' fermionic action $\beta_2 = 1$.

		The coupling constant $\beta_2$ plays a very similar role to the Dyson exponent in random matrix theory.
		When considering a random matrix ensemble of self-adjoint matrices there is a repulsive effect between the eigenvalues originating from a factor $\prod_{i < j} |\lambda_i - \lambda_j|^\beta$ appearing in Weyl's integration formula \cite[Ch. 5.3]{deift1999orthogonal}.
		The parameter $\beta$ is called the Dyson exponent and is 1 for real self-adjoint, 2 for complex self-adjoint and 4 for quaternionic self-adjoint matrices.
		The similarity of the eigenvalue repulsion inherent to random matrix ensembles and the eigenvalue repulsion caused by $F(D)$ inspired the notation for $\beta_2$ as well as the normalization of $\beta_2$ in such a way that the values of $\beta_2$ correspond to those of the Dyson exponent for the complex sesquilinear and real variations of the fermionic action.

		Since the action is unitary invariant, the probability density can in fact be written entirely in terms of the eigenvalues of $H$ by Weyl integration, yielding
		\begin{align}
			p(\lambda) & = \frac{1}{Z_{N,\lambda}} \exp\left(\sum_{i,j} -2g_4 \left(m^2 + (\lambda_i - \lambda_j)^2\right)^2 - 2g_2 \left(m^2 + (\lambda_i - \lambda_j)^2\right) + \frac{\beta_2}{4} \log\left(m^2 + (\lambda_i - \lambda_j)^2\right) \right. \nonumber \\ & \hspace{15em} \left. + \sum_{i \neq j} \frac{\beta}{4} \log\left((\lambda_i - \lambda_j)^2\right) - a \left(\sum_i \lambda_i\right)^2 \right).
			\label{eq:fermionic_eigenvalue_model}
		\end{align}
		The appearance of a factor two in front of the $g_4$ and $g_2$ corresponds to the dimension of the fermion space.

		Note that for $m = 0$ the fermionic action adds to the Vandermonde repulsion, making these fermionic models examples of the so-called $\beta$-ensembles.
		For the $m = 0$ case with the reduced fermionic action we achieve a Dyson exponent of 3. In general we can get achieve integer Dyson exponent greater than 2 by adding a higher dimensional fermionic space with the reduced fermion action.
		General $\beta$-ensembles are the subject of active research \cite{bourgade2014edge,ramirez2011beta,forrester2010log}, providing a further application of these fermionic ensembles.

\section{The spectral density}
\label{sec:the_spectral_density}
	
	Sequences of random matrix ensembles of the form
	$$
		p_N(H) \propto \exp\left(-N\Tr(V(H))\right),
	$$
	with $V$ being a polynomial such that $\lim_{|x|\to\infty} V(x) = \infty$, are known \cite{johansson1998fluctuations} (see also \cite{deift1999orthogonal} for a more expository treatment) to have a probability measure $\mu_E$ such that for bounded functions $f$
	$$
		\lim_{N\to\infty} \langle \frac{1}{N}\Tr(f(H)) \rangle_{N} = \int_\R f(x) d \mu_E(x).
	$$
	This probability measure is the weak-$*$ limit of the expectation values of the spectral densities of $H$ as $N \to \infty$, and is therefore called the large $N$ spectral density or sometimes simply the spectral density of the matrix ensemble.
	In the following subsections, by generalizing the techniques from \cite{johansson1998fluctuations} we establish the existence of such a large $N$ spectral density for multi-tracial models with non-polynomial action. These generalizations, in particular, allow us to cover the fermionic quartic $(0, 1)$-fuzzy Dirac ensembles.

	\subsection{The equilibrium measure for multi-tracial matrix ensembles}
	\label{subsec:existence_of_eq_measure}
	
		Let us introduce some notation.
		\begin{notation}
			Let $k, N \in \N$, $k < N$, $f:\R^k \to \R$. Then define
			\begin{itemize}
				\item $[N] = \{1, \ldots, N\}$,
				\item $\Sym_N f:\R^N \to \R$ by $(\Sym_N f)(x) = \frac{1}{N^k}\sum_{i:[k] \to [N]} f(x_{i(1)}, \ldots, x_{i(k)})$,
				\item $\Per_N f:\R^N \to \R$ by $(\Per_N f)(x) = \frac{(N-k)!}{N!} \sum_{i:[k] \into [N]} f(x_{i(1)}, \ldots, x_{i(k)})$.
			\end{itemize}

			For a function $f:\R^k \to \R$ and $\mu$ a measure on $\R$ we define
			$$
				\langle f, \mu \rangle := \int_{\R^k} f(x) d\mu^{\otimes k}(x).
			$$

			Finally, all indices in otherwise unlabelled sums are assumed to run from $1$ to $N$ where $N$ will be clear from context.
		\end{notation}

		In this section we will consider an eigenvalue model that, at size $N$, has the form
		\begin{equation}
			p_N(\lambda) \propto \exp\left(-N^2 (\Sym_N U) (\lambda) + N(N-1)(\Per_N U_{Vdm})(\lambda)\right)
			\label{eq:abstract_eigenvalue_model}
		\end{equation}
		for 
		\[
			U_{Vdm}(x, y) = \frac{\beta}{4}\log\left((x-y)^2\right)
		\] 
		and some function $U:\R^k \to \R$. 
		The \emph{energy functional}, with values in $\R \cup \{\pm \infty\}$, corresponding to this eigenvalue model is defined by
		\begin{equation}
			I(\mu) = \langle (U - \Per_k U_{Vdm}), \mu \rangle = \int_{\R^k} U(x) - (\Per_k U_{Vdm})(x) d \mu^{\otimes k}(x)
			\label{eq:def_energy_functional}
		\end{equation}
		on any probability measure $\mu \in \mathcal{P}(\R)$.

		In order to obtain the existence of a spectral density in the large $N$ limit we impose the following assumptions on the function $U$.
		\begin{assumption}
			\label{ass:interaction_requirements}
			The matrix interaction $U:\R^k \to \R$ is a continuous function such that
			\begin{enumerate}
				\item $U$ is invariant under permutation of its arguments. Note that this can always be accomplished by replacing $U$ by $\Sym_k U$.
				\item There exists a function $u:\R \to \R$ such that
				\begin{enumerate}
					\item $U(x_1, \ldots, x_k) \geq u(x_1)$ for all $\vec{x} \in \R^k$,
					\item $u(x) - \frac{\max(\beta, 2)}{2} \log(1+x^2) \to \infty$ as $|x| \to \infty$,
				\end{enumerate}
				\item There is a set of ``candidate measures'' $\mathcal{P}_{can} \subset \mathcal{P}(\R)$ such that if
				$$
					I(\mu) = \inf_{\nu \in \mathcal{P}(\R)} I(\nu)
				$$ 
				then $\mu \in \mathcal{P}_{can}$, and for $\mu, \nu \in \mathcal{P}_{can}$ we have
				$$
					\frac{d^2}{dt^2} \int_{\R^k} U(\vec{x}) d\left((1-t)\mu + t\nu\right)^{\otimes k}(\vec{x}) \geq 0
				$$
				for $t \in [0,1]$.
			\end{enumerate}
		\end{assumption}

		\begin{remark}
			The formulation of the third assumption is the biggest deviation from \cite{johansson1998fluctuations}. The energy functional for the $(0, 1)$-fuzzy Dirac ensemble model is not necessarily convex on the space of probability measures, but it is convex between probability measures of the same mean. Using the inclusion of the non-physical probability density $\exp(-a\Tr(H)^2)$ on the fibers (see Section \ref{subsec:quartic_assoc_matrix_model}) we can show that any measure minimizing $I$ must have mean 0 so that we can set $\mathcal{P}_{can}$ to be probability measures of mean zero, on which the energy functional is convex.
		\end{remark}

		The first step is proving the existence of the equilibrium measure, a unique measure minimizing the associated energy functional $I$.

		\begin{theorem}
			\label{thm:equilibrium_measure_exists}
			Suppose a matrix interaction $U$ satisfies Assumption \ref{ass:interaction_requirements}. Then the energy functional $I$ of Equation \ref{eq:def_energy_functional} has a unique minimizing probability measure $\mu_E$ and this measure has compact support.
		\end{theorem}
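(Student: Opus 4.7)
By permutation invariance of $U$, and since $\int_{\R^k} (\Per_k U_{Vdm})\, d\mu^{\otimes k}$ collapses to $\iint U_{Vdm}(x, y)\, d\mu(x)\, d\mu(y)$ for any probability measure $\mu$, the energy functional reduces to the classical log-gas form
\[
I(\mu) \;=\; \int_{\R^k} U(\vec x)\, d\mu^{\otimes k}(\vec x) \;-\; \frac{\beta}{2} \int_\R\int_\R \log|x-y|\, d\mu(x)\, d\mu(y).
\]
From here the plan follows the Frostman-type strategy of \cite{johansson1998fluctuations,deift1999orthogonal}, adapted to the multi-tracial setting, in four steps: (i) lower semi-continuity of $I$; (ii) tightness of minimizing sequences, yielding existence; (iii) strict convexity on $\mathcal{P}_{can}$, yielding uniqueness; (iv) compactness of $\supp \mu_E$ via an Euler--Lagrange argument.

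For (i) and (ii), symmetrizing the bound $U(\vec x) \geq u(x_1)$ from Assumption \ref{ass:interaction_requirements}(2) gives $U(\vec x) \geq \frac{1}{k}\sum_i u(x_i)$, and the elementary estimate $\log|x-y| \leq \log(1+|x|) + \log(1+|y|)$ combined with the growth of $u$ shows the full integrand is bounded below, so $I$ takes values in $\R\cup\{+\infty\}$. Splitting $-\log|x-y|$ into its positive (lower semi-continuous) and negative (dominated by the confining potential) parts, a Fatou argument then gives weak-$*$ lower semi-continuity of $I$. The coercivity hypothesis $u(x) - \frac{\max(\beta,2)}{2}\log(1+x^2) \to \infty$ yields uniform tightness of any minimizing sequence, and Prokhorov combined with lower semi-continuity produces a minimizer $\mu_E$.

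For (iii), Assumption \ref{ass:interaction_requirements}(3) forces any minimizer into $\mathcal{P}_{can}$. On this set, the second derivative of $t \mapsto \int U\, d\mu_t^{\otimes k}$ along $\mu_t = (1-t)\mu + t\nu$ is nonnegative by assumption, while the Vandermonde contribution to the second derivative of $I(\mu_t)$ equals $-\beta \iint \log|x-y|\, d\sigma(x)\, d\sigma(y)$ with $\sigma := \nu - \mu$. Since $\sigma$ is a nonzero signed measure of total mass zero, strict positivity follows from the standard positive-definiteness of the logarithmic kernel on such measures, provable via the Frullani representation $-\log|x| = \frac{1}{2}\int_0^\infty (e^{-tx^2} - e^{-t})\, t^{-1}\, dt$ together with Bochner's theorem applied to the Gaussian kernel. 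Strict convexity of $I$ on $\mathcal{P}_{can}$ then yields uniqueness.

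For (iv), the first variation of $I$ along admissible perturbations $\mu_E + \epsilon \sigma$ with $\int d\sigma = 0$ and $\sigma^- \ll \mu_E$ yields the Euler--Lagrange condition that the effective one-particle potential
\[
V_{\text{eff}}(x) \;:=\; k \int_{\R^{k-1}} U(x, x_2, \ldots, x_k)\, d\mu_E^{\otimes(k-1)}(x_2,\ldots,x_k) \;-\; \beta \int_\R \log|x-y|\, d\mu_E(y)
\]
is constant on $\supp \mu_E$ and bounded below by that constant off the support. Since $I(\mu_E) < \infty$ forces $\int \log(1+|y|)\, d\mu_E(y) < \infty$, one has $\int \log|x-y|\, d\mu_E(y) \leq \log(1+|x|) + C$, and combined with the coercivity of $u$ this gives $V_{\text{eff}}(x) \to \infty$ as $|x| \to \infty$; hence $\supp \mu_E$ must be bounded. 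The principal obstacle I anticipate lies in making this Euler--Lagrange argument fully rigorous in the multi-tracial, non-polynomial setting: unlike in the single-trace polynomial case, one must carefully verify continuity and finiteness of $V_{\text{eff}}$, admissibility of the signed perturbations, and the interchange of differentiation with integration, without access to the regularity shortcuts afforded by polynomial $U$.
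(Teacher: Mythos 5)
Your plan follows essentially the same route as the paper's proof (which itself defers to Johansson's Theorem 2.1 and \cite{verhoeven2023thesis}): coercivity from Assumption \ref{ass:interaction_requirements}(2) gives tightness and hence a weak-$*$ limit minimizer, convexity of the $U$-part on $\mathcal{P}_{can}$ plus strict positive-definiteness of the logarithmic kernel on mass-zero signed measures gives uniqueness, and an effective-potential/variational argument gives compact support. The details you flag as delicate (lower semicontinuity via splitting the log kernel, the Frullani/Fourier positivity argument, and the growth of the effective one-particle potential) are exactly the ones handled in the cited references, so the proposal is correct.
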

		\begin{proof}[Sketch of proof.]
			The proof of this theorem is essentially identical to the proof of \cite[Thm. 2.1]{johansson1998fluctuations} so we include only the very broad strokes here. The proof first uses point 2 of Assumption \ref{ass:interaction_requirements} to establish that any sequence of measures approaching the infimum of $I$ is tight, \textit{i.e.} essentially confined to some compact subset of $\R$, so that it has a weak-$*$ limit point that minimizes $I$. Then the convexity property of Assumption \ref{ass:interaction_requirements} gives uniqueness. Full details of the required changes can be found in \cite[Section 3.2]{verhoeven2023thesis}.
		\end{proof}

		The next step is to show that this equilibrium measure $\mu_E$ is indeed the spectral density.
		Here we make a slightly larger deviation from \cite{johansson1998fluctuations,verhoeven2023thesis} to deal with both the multi-tracial nature of our models and to be able to deal with a larger class of observables beyond the bounded functions.
		
		\begin{definition}
			A sequence of functions $F_N:\R^N \to \R$ is called a \emph{tracial observable} if it is of the form
			$$
				F_N(x) = f\left( (\Sym_N f_1)(x), \ldots, (\Sym_N f_n)(x) \right)
			$$
			for some function $f:\R^n \to \R$ and $n$ functions $f_i:\R^{m_i} \to \R$.
		\end{definition}

		\begin{example*}
			Some relevant examples of tracial observables, in terms of $\lambda = (\lambda_1, \ldots, \lambda_N)$ the spectrum of an $N \times N$ matrix $H$, are
			\begin{itemize}
				\item Normalized products of traces of powers of a matrix, such as $\frac{1}{N^2}\Tr\left(H^3\right)\Tr(H)$:
				$$
					\frac{1}{N^2}\Tr\left(H^3\right)\Tr(H) = \frac{1}{N}\left(\sum_{i} \lambda_i^3\right) \frac{1}{N}\left(\sum_j \lambda_j\right) = \left(\Sym_N x^3\right)(\lambda) \cdot \left(\Sym_N x\right)(\lambda),
				$$
				or alternatively,
				$$
					\frac{1}{N^2}\Tr\left(H^3\right)\Tr(H) = \frac{1}{N^2} \sum_{i, j} \lambda_i^3 \lambda_j = \left(\Sym_N x^3y\right)(\lambda).
				$$
				\item Normalized traces of powers of $D = [H, \cdot]$, for example:
				$$
					\frac{1}{N^2}\Tr\left(D^2\right) = \frac{1}{N^2} \sum_{i,j} (\lambda_i - \lambda_j)^2 = \left(\Sym_N (x-y)^2\right)(\lambda).
				$$
				\item The spectral dimension, see Section \ref{sec:spectral_properties}, of $D = [H, \cdot]$ at $t \in \R$:
				$$
					d_s(t) := 2t \frac{\Tr\left(D^2 e^{-D^2t}\right)}{\Tr\left(e^{-D^2t}\right)} = 2t \frac{\sum_{i,j} (\lambda_i-\lambda_j)^2 e^{-(\lambda_i-\lambda_j)^2 t}}{\sum_{i,j} e^{-(\lambda_i-\lambda_j)^2 t}} = 2t \frac{\left(\Sym_N (x-y)^2 e^{-(x-y)^2t}\right)(\lambda)}{\left(\Sym_N e^{-(x-y)^2t}\right)(\lambda)}.
				$$
			\end{itemize}
		\end{example*}

		The power of Theorem \ref{thm:convergence_of_tracial_observables} is the ability to move the expectation values into $f$. This is a significant generalization of the factorization of moments
		$$
			\lim_{N \to \infty} \langle \frac{1}{N^2}\Tr(H^k)\Tr(H^l) \rangle = \left(\lim_{N \to \infty} \langle \frac{1}{N}\Tr(H^k) \rangle\right)\left(\lim_{N \to \infty} \langle \frac{1}{N}\Tr(H^l) \rangle\right)
		$$
		observed in \cite{hessam2022bootstrapping}.

		\begin{theorem}
			\label{thm:convergence_of_tracial_observables}
			Given a sequence of eigenvalue models as in Equation \ref{eq:abstract_eigenvalue_model} satisfying Assumption \ref{ass:interaction_requirements} and a tracial observable $F_N:\R^N \to \R$ such that
			\begin{enumerate}
				\item there is a function $g:\R^n \to \R$ such that $|F_N(x)| \leq (\Sym_N g)(x)$ for all $N$ and $x \in \R^N$ and the integral
				$$
					\int_{\R^n} g(x) e^{-\sum_{i=1}^n u(x_i)} d^n x
				$$
				is bounded,
				\item the constituent functions $f_1, \ldots, f_n$ are bounded,
				\item for any probability measure $\nu$ the function $f$ is continuous at the point $\left(\langle f_1, \nu\rangle, \ldots, \langle f_n, \nu\rangle \right)$.
			\end{enumerate}
			We then have
			$$
				\lim_{N \to \infty} \langle F_N \rangle_{p_N} = f\left(\langle f_1, \mu^E \rangle, \ldots, \langle f_n, \mu^E \rangle \right).
			$$
		\end{theorem}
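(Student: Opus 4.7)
My plan is to split the argument into three pieces: (i) weak-$*$ convergence in probability of the empirical eigenvalue measure $\mu_N^\lambda := \frac{1}{N}\sum_j \delta_{\lambda_j}$ to $\mu_E$ under $p_N$; (ii) transferring this to convergence of $F_N(\lambda)$ in probability via the continuity assumption on $f$; and (iii) a uniform integrability argument that lifts convergence in probability to convergence of expectations.

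\textbf{Steps (i) and (ii).} For (i) I would extend the large-deviations argument of \cite{johansson1998fluctuations} to the multi-tracial setting. Rewriting Equation \ref{eq:abstract_eigenvalue_model} as $p_N(\lambda) \propto \exp\!\bigl(-N^2 I(\mu_N^\lambda) + o(N^2)\bigr)$, where the error comes from diagonal corrections controlled by continuity of $U$ and Assumption \ref{ass:interaction_requirements}(1), the mass of $p_N$ concentrates on any weak-$*$ neighbourhood of the unique minimizer $\mu_E$ from Theorem \ref{thm:equilibrium_measure_exists}; tightness of the candidate sequences is supplied by Assumption \ref{ass:interaction_requirements}(2)(b). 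For (ii), observe that $(\Sym_N f_i)(\lambda) = \langle f_i, \mu_N^\lambda\rangle$ in the paper's notation, and that $\Sym_N f_i$ and $\Per_N f_i$ differ by $O(1/N)$ when $f_i$ is bounded. Hence the Portmanteau theorem gives $(\Sym_N f_i)(\lambda) \to \langle f_i, \mu_E\rangle$ in $p_N$-probability, and continuity of $f$ at the deterministic point $(\langle f_1, \mu_E\rangle, \ldots, \langle f_n, \mu_E\rangle)$ combined with the continuous mapping theorem yields $F_N(\lambda) \to f(\langle f_1, \mu_E\rangle, \ldots, \langle f_n, \mu_E\rangle)$ in probability.

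\textbf{Step (iii).} For uniform integrability the domination $|F_N| \leq \Sym_N g$ from condition 1 reduces the problem to showing $\sup_N \langle \Sym_N g\rangle_{p_N} < \infty$ together with matching tail control. Symmetry of $U$ and Assumption \ref{ass:interaction_requirements}(2)(a) yield $(\Sym_N U)(\lambda) \geq \frac{1}{N}\sum_j u(\lambda_j)$, while $\log|\lambda_i - \lambda_j| \leq \tfrac{1}{2}\log(1+\lambda_i^2) + \tfrac{1}{2}\log(1+\lambda_j^2)$ controls the Vandermonde. Absorbing the logarithmic piece into $u$ by Assumption \ref{ass:interaction_requirements}(2)(b) produces a product-form upper bound
\[
    p_N(\lambda) \leq \frac{C}{Z_N}\prod_{j=1}^N e^{-\tilde u(\lambda_j)}
\]
with $\int_\R e^{-\tilde u}\, d\lambda < \infty$, and a matching lower bound on $Z_N$ obtained by testing against a smooth trial measure supported near $\mu_E$ keeps the prefactor $C/Z_N$ under control. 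Expanding $\Sym_N g$ as a sum over $n$-tuples of indices and integrating out the $N-n$ spectator coordinates against the product-form bound reduces $\langle \Sym_N g\rangle_{p_N}$ to a bounded multiple of the finite integral in condition 1, and truncating $g$ at height $M$ yields the required uniform integrability. Together with Step (ii) this gives $\langle F_N\rangle_{p_N} \to f(\langle f_1, \mu_E\rangle, \ldots, \langle f_n, \mu_E\rangle)$.

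\textbf{Main obstacle.} The principal difficulty is Step (iii): producing an upper bound on $p_N$ of genuine product form in the multi-tracial case and matching it to the lower bound on $Z_N$ sharply enough that the $N$-dependent prefactors cancel, since $\Sym_N U$ interacts with the Vandermonde in a more delicate way than in the single-particle setting of \cite{johansson1998fluctuations}. A minor subsidiary point is that a bounded measurable $f_i$ only suffices in Step (ii) if one reads ``bounded'' in the hypothesis as ``bounded and $\mu_E$-almost surely continuous'', so that the Portmanteau theorem applies cleanly.
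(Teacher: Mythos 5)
Your overall architecture (concentration of the empirical measure, continuous mapping, then uniform integrability) is a reasonable reorganization of the paper's argument, which instead restricts expectations to the sublevel sets $A_{N,\eta} = \{x : (\Sym_N U)(x) - (\Per_N U_{Vdm})(x) \le E + \eta\}$, bounds $\langle F_N \chi_{A_{N,\eta}}\rangle$ by the extremal value of $F_N$ on $A_{N,\eta}$, and extracts weak limit points of the empirical measures of the maximizers. However, two of your specific claims are false as stated. First, in Step (i) the rewriting $p_N(\lambda) \propto \exp\left(-N^2 I(\mu_N^\lambda) + o(N^2)\right)$ cannot hold: the energy functional contains $-\langle \Per_k U_{Vdm}, \mu^{\otimes k}\rangle$, and for an atomic measure $\mu_N^\lambda$ the diagonal contributions give $\log(0)$, so $I(\mu_N^\lambda) = +\infty$ identically. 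The diagonal is not an $o(N^2)$ correction controlled by continuity of $U$ (which is continuous); the divergence sits entirely in $U_{Vdm}$. This is exactly why the paper never evaluates $I$ on empirical measures directly but keeps the injective symmetrization $\Per_N$ throughout and, in its Lemma on limit points of measures supported on $A_{N,\eta}$, carefully shuttles the function $l(x) = \frac{\beta}{2}\log(1+x^2)$ between $U$ and $U_{Vdm}$ and truncates at level $L$ before passing to the limit. Your argument needs an equivalent regularization.

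Second, and more seriously, the product-form domination $p_N(\lambda) \le \frac{C}{Z_N}\prod_j e^{-\tilde u(\lambda_j)}$ with a prefactor that stays under control is not achievable globally: the pointwise bound on the exponent costs you a factor of order $e^{-N\sum_j \tilde u(\lambda_j)}$ at best, while $1/Z_N$ grows like $e^{N^2(E+\eta/2)}$, and in the bulk nothing cancels the $e^{O(N^2)}$ prefactor. You correctly flag this as the principal difficulty, but the resolution is not a sharper matching of prefactors; it is that the domination is only needed, and only works, on the complement of the good set. There the defining inequality of $A_{N,\eta}$ lets one trade $N(N-1)(\Per_N U_{Vdm})$ for $N(N-1)(\Sym_N U) - N(N-1)(E+\eta)$, so the integrand is bounded by $e^{-\eta N^2/2 + O(N)} \cdot (\Sym_N g)(x)\, e^{-N(\Sym_N u)(x)}$, whose integral factorizes into your condition-1 integral times $(\int e^{-u})^{N}$ and is killed by the $e^{-\eta N^2/2}$. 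On the good set itself no $g$-domination is needed at all: there $F_N$ is controlled directly through the bounded constituent functions $f_i$ and the continuity of $f$. Separating these two mechanisms, as the paper does, is the missing idea that closes your Step (iii). Your subsidiary remark that the $f_i$ must be (at least $\mu_E$-a.e.) continuous for the Portmanteau step is correct and applies equally to the paper's own proof.
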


		Before turning our attention to the technical details of this proof, let us establish that this theorem covers at least the cases we are most interested in.

		\begin{proposition}
			\label{prop:our_model_satisfies_assumptions}
			The function
			$$
				U(x, y) = 2g_4 \left(m^2 + (x-y)^2\right)^2 + 2g_2 \left(m^2 + (x-y)^2\right) - \frac{\beta_2}{4} \log\left(m^2 + (x-y)^2\right) + \frac{a}{2}xy
			$$
			defining the fermionic quartic $(0, 1)$-fuzzy geometry satisfies Assumption \ref{ass:interaction_requirements}.
		\end{proposition}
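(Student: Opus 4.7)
The plan is to verify the three conditions of Assumption~\ref{ass:interaction_requirements} for $U$ in turn. Symmetry is immediate since $U$ depends on $(x,y)$ only through $(x-y)^2$ and $xy$, both symmetric. For the lower bound condition, I would fix $x$ and consider $U(x,y)$ as a function of $y$: since $g_4 > 0$, the quartic $2g_4(x-y)^4$ dominates both $\frac{a}{2}xy$ and the log term as $|y|\to\infty$, so $u(x) := \inf_y U(x,y)$ is finite for every $x$. A stationary-point calculation pins the minimizer at $y = x - O(|x|^{1/3})$, yielding $u(x) = \frac{a}{2}x^2 + O(|x|^{4/3})$ as $|x|\to\infty$. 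A matching lower bound on $u(x)$ follows by bounding $-\frac{\beta_2}{4}\log(m^2+(x-y)^2)$ and $2g_2(m^2+(x-y)^2)$ by $\epsilon(x-y)^4 - C_\epsilon$ and absorbing into the quartic. For $a > 0$ the resulting quadratic growth dominates $\frac{\max(\beta,2)}{2}\log(1+x^2)$.

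For the third condition I first need to identify a suitable $\mathcal{P}_{can}$. Every term in $U$ except $\frac{a}{2}xy$ is translation invariant; applying a translation $\mu_c(A) = \mu(A-c)$ one computes
\begin{equation*}
I(\mu_c) - I(\mu) = \frac{a}{2}\left((M(\mu)+c)^2 - M(\mu)^2\right), \qquad M(\mu) := \int x\,d\mu,
\end{equation*}
so unless $M(\mu) = 0$ the value $I(\mu)$ is strictly lowered by picking $c = -M(\mu)$. Hence any minimizer has mean zero, and I take $\mathcal{P}_{can}$ to be the probability measures of mean zero with finite fourth moment (the latter so that $I(\mu)$ is finite and the Fourier arguments below converge).

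The convexity inequality then reduces, upon writing $\mu_t = \mu + t\sigma$ with $\sigma = \nu - \mu$, to showing $\int\int U(x,y)\,d\sigma(x)\,d\sigma(y) \geq 0$; note that $\sigma$ has both zero mass and zero mean. For the polynomial part of $U$, expanding in monomials $x^a y^b$, every term with $\min(a,b)\leq 1$ integrates to zero by mass/mean cancellation, leaving only the $2 g_4 \cdot 6\, x^2 y^2$ contribution, which yields $12 g_4 (\int x^2\,d\sigma)^2 \geq 0$. For the log part, I would combine the identity
\begin{equation*}
\log(M^2+r^2) - \log(m^2+r^2) = \int_m^M \frac{2s}{s^2+r^2}\,ds
\end{equation*}
with Plancherel applied to the Poisson kernel,
\begin{equation*}
\int\int \frac{2s}{s^2+(x-y)^2}\,d\sigma(x)\,d\sigma(y) = \int e^{-s|\xi|}\,|\widehat\sigma(\xi)|^2\,d\xi \geq 0,
\end{equation*}
and then let $M\to\infty$ to obtain $-\int\int \log(m^2+(x-y)^2)\,d\sigma\,d\sigma = \int_m^\infty \int e^{-s|\xi|}\,|\widehat\sigma(\xi)|^2\,d\xi\,ds \geq 0$.

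The main obstacle is controlling this $M\to\infty$ limit: one must show that $\int\int \log(M^2+(x-y)^2)\,d\sigma\,d\sigma \to 0$. Since $\sigma(\R) = 0$, the integrand differs from $\log(1+(x-y)^2/M^2)$ by a constant which contributes $0$, and the tail is handled by dominated convergence after first reducing to compactly supported $\sigma$ by a cutoff, which is permissible thanks to the finite fourth moment assumption. Once this limit is established, the non-negativity $\int U\,d\sigma\,d\sigma \geq 0$ follows and condition (3) is verified, completing the proof.
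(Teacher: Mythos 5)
Your proof is correct and follows essentially the same route as the paper: the lower bound $u(x)\sim \tfrac{a}{2}x^2$ via the translation structure and the $|x|^{4/3}$ correction, $\mathcal{P}_{can}$ taken to be mean-zero measures because translating to mean zero strictly lowers $I$, the polynomial pairing collapsing to $12g_4\langle x^2,\sigma\rangle^2\geq 0$ after mass/mean cancellations, and positive-definiteness of the logarithmic kernel on zero-mass signed measures via Fourier analysis. The only real difference is that where the paper cites an explicit Fourier integral formula from the thesis for $\langle U_{log},\sigma\rangle$, you rederive it through the Poisson-kernel identity $\log(M^2+r^2)-\log(m^2+r^2)=\int_m^M \frac{2s}{s^2+r^2}\,ds$ and a controlled $M\to\infty$ limit, arriving at the same nonnegative expression.
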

		\begin{proof}
			Note that $U(x+t, y+t) = U(x, y) - axy + a(x+t)(y+t) = U(x, y) + a(x+y)t + at^2$.
			Hence
			\begin{align*}
				\inf_{y \in \R} U(x, y) & = \inf_{y \in \R} U(0, y-x) + axy, \\
				& = \inf_{y \in \R} U(0, y) + ax(y+x), \\
				& \geq ax^2 + \inf_{y \in \R}  \left(2g_4 m^4 + 4 g_4 m^2 y^2 + g_4 y^4 + 2g_2(m^2 + y^2) - \frac{\beta_2}{4} \log\left(m^2 + y^2\right)\middle) + \inf_{y \in \R} \middle(g_4 y^4 + axy\right).
			\end{align*}
			By some straightforward calculus, the third term is proportional to $x^{\frac{4}{3}}$.
			The second term is independent of $x$ and finite so that as a whole
			$$
				U(x, y) \geq \frac{1}{2}a x^2,
			$$
			at least for $x$ large enough. 
			This proves the second point of Assumption \ref{ass:interaction_requirements}.

			For the third point observe that for any probability measure $\mu$ we have
			\begin{align*}
				I(\mu) & = \langle U, \mu \rangle = \int_{\R^2} U(x, y) d\mu(x)d\mu(y), \\
				& = \int_{\R^2} U(x + t, y + t) - a(x+y)t + at^2 d\mu(x)d\mu(y), \\
				& = \int_{\R^2} U(x, y) d\mu(x-t) d\mu(y-t) - 2at \langle x, \mu \rangle + at^2.
			\end{align*}
			This means that among a family of translates of a given probability measure, the translate with mean equal to zero will minimize the energy functional.

			Splitting $U$ into the polynomial part and logarithmic part
			\begin{align*}
				U_{pol}(x, y) & = 2g_4 \left(m^2 + (x-y)^2\right)^2 + 2g_2 \left(m^2 + (x-y)^2\right) + \frac{a}{2}xy, \\
				U_{log}(x, y) & =  - \frac{\beta_2}{4} \log\left(m^2 + (x-y)^2\right),
			\end{align*}
			both pairings can be computed separately.

			The first pairing
			$$
				\langle U_{log}, \nu - \mu \rangle = \frac{\beta_2}{2} \int_0^\infty \frac{1}{k}e^{-mk^2} \left|\widehat{(\nu-\mu)}(k)\right|^2 dk \geq 0,
			$$
			where $\widehat{\nu-\mu}$ denotes the Fourier transform of $\nu-\mu$ \cite[Prop. 3.1.6]{verhoeven2023thesis}.
			The polynomial pairing on the other hand can be computed to be 
			$$
				\langle U_{pol}, \nu - \mu \rangle = 12g_4 \langle x^2, \nu - \mu \rangle^2 - 16 g_4 \langle x^3, \nu - \mu \rangle \langle x, \nu - \mu \rangle + \left(-4g_2 - 8m^2g_4 + \frac{a}{2}\right) \langle x, \nu-\mu \rangle^2.
			$$
			If $\nu$ and $\mu$ are both potential minimizers their means agree, $\langle x, \nu - \mu \rangle = 0$.
			Hence also $\langle U_{pol} , \nu - \mu \rangle \geq 0$ and all three assumptions are satisfied.
		\end{proof}

		\begin{proposition}
			The spectral dimension and spectral variance (see Section \ref{sec:spectral_properties}) of $D = [H, \cdot]$ can be computed for the fermionic quartic $(0, 1)$-fuzzy geometry by Theorem \ref{thm:convergence_of_tracial_observables} from the equilibrium measure for $H$.
		\end{proposition}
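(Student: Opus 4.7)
The strategy is to exhibit the spectral dimension and spectral variance as tracial observables in the sense of the definition preceding Theorem~\ref{thm:convergence_of_tracial_observables}, verify the three hypotheses of that theorem, and then apply it with the equilibrium measure $\mu_E$ supplied by Theorem~\ref{thm:equilibrium_measure_exists} together with Proposition~\ref{prop:our_model_satisfies_assumptions}. The tracial form of $d_s(t)$ is already displayed in the Example preceding that theorem,
\[
	d_s(t) = f\!\bigl((\Sym_N f_1)(\lambda),\,(\Sym_N f_2)(\lambda)\bigr),\qquad f(a,b)=2ta/b,
\]
with $f_1(x,y)=(x-y)^2 e^{-(x-y)^2 t}$ and $f_2(x,y)=e^{-(x-y)^2 t}$. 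The spectral variance fits the same template after adjoining an additional constituent $f_3(x,y)=(x-y)^4 e^{-(x-y)^2 t}$ and taking an outer function of the form $f(a,b,c)=c/b-(a/b)^2$, up to whatever normalization is fixed in Section~\ref{sec:spectral_properties}.

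Hypotheses~2 and~3 are straightforward. Boundedness of the constituents uses that $ue^{-ut}$ attains maximum $1/(et)$ on $[0,\infty)$, giving $|f_1|\leq 1/(et)$, $|f_2|\leq 1$, and $|f_3|\leq 4/(e^2t^2)$. For continuity we note that
\[
	\langle f_2,\nu\rangle=\iint e^{-(x-y)^2 t}\,d\nu(x)\,d\nu(y)>0
\]
for every probability measure $\nu$, since the integrand is strictly positive; the outer function $f$ is rational with the second argument in the denominator, and hence continuous at every evaluation point of the form $(\langle f_1,\nu\rangle,\langle f_2,\nu\rangle,\ldots)$.

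The main obstacle is hypothesis~1, which demands a function $g$ with $|F_N|\leq \Sym_N g$ pointwise and $\int g\,e^{-\sum u}$ finite. The ratio structure of $F_N$, together with the fact that the denominator $\Sym_N f_2$ can be close to zero for very spread-out configurations, rules out the most naive choice of $g$. My plan is to exploit the weighted-average interpretation $d_s(t)=2t\sum_{i,j}w_{ij}(\lambda_i-\lambda_j)^2$ (with nonnegative weights summing to one) to obtain the pointwise bound $|F_N(\lambda)|\leq 2t\max_{i,j}(\lambda_i-\lambda_j)^2\leq 8t\max_i \lambda_i^2$, and then to use the confinement condition in Assumption~\ref{ass:interaction_requirements}---under which $u(x)$ dominates $\log(1+x^2)$ at infinity, making the tail of $\max_i|\lambda_i|$ integrable against $e^{-\sum u}$---to produce an admissible $g$ of polynomial type in $x^2+y^2$. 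An identical argument with $(x-y)^4$ in place of $(x-y)^2$ handles the spectral variance. Once all three hypotheses are verified, Theorem~\ref{thm:convergence_of_tracial_observables} directly delivers the desired convergence of $\langle d_s(t)\rangle$ and of the spectral variance to the corresponding functionals of $\mu_E$.
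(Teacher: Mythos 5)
Your identification of $d_s(t)$ and $v_s(t)$ as tracial observables, the boundedness of the constituent functions, and the continuity argument (strict positivity of $\langle f_2,\nu\rangle$ for every probability measure $\nu$, hence continuity of the rational outer function at the relevant evaluation point) all match the paper's proof, up to the cosmetic difference that the paper works with the massive constituents built from $m^2+(x-y)^2$ and establishes positivity of $\langle f_2,\nu\rangle$ quantitatively via tightness.

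The gap is in your verification of hypothesis~1. The chain $|F_N(\lambda)|\le 2t\max_{i,j}(\lambda_i-\lambda_j)^2\le 8t\max_i\lambda_i^2$ is correct, but no function $g$ of a fixed number of variables can satisfy $8t\max_i\lambda_i^2\le(\Sym_N g)(\lambda)$ for all $N$: every $(\Sym_N g)(\lambda)$ is an \emph{average} of values of $g$ over tuples of eigenvalues, i.e.\ an integral against the empirical measure $\frac1N\sum_i\delta_{\lambda_i}$, and the maximum of $N$ nonnegative numbers can exceed their average by a factor of order $N$. Concretely, for $\lambda=(R,0,\dots,0)$ one has $8t\max_i\lambda_i^2=8tR^2$ while $\bigl(\Sym_N\,C(x^2+y^2)\bigr)(\lambda)=2CR^2/N$, so the proposed domination fails for large $N$ no matter which polynomial $g$ you choose; and ``integrability of the tail of $\max_i|\lambda_i|$'' is not the shape of condition Theorem~\ref{thm:convergence_of_tracial_observables} asks for. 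The max-based bound is simply too lossy. The fix---and what the paper does---is to use the sharper observation that $F_N/(2t)$ is a weighted average of the quantities $a_{ij}=m^2+(\lambda_i-\lambda_j)^2$ with weights $e^{-ta_{ij}}$ that are \emph{decreasing} in $a_{ij}$; by Chebyshev's sum inequality such a weighted average is bounded by the unweighted average, giving directly $|F_N(\lambda)|\le\bigl(\Sym_N\,2t(m^2+(x-y)^2)\bigr)(\lambda)$, i.e.\ $g(x,y)=2t\bigl(m^2+(x-y)^2\bigr)$, whose integral against $e^{-u(x)-u(y)}$ with $u(x)=\tfrac12 ax^2$ (from Proposition~\ref{prop:our_model_satisfies_assumptions}) is manifestly finite. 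The same correlation inequality applied to $(x-y)^4$ handles the numerator of the spectral variance.
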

		\begin{proof}
			The spectral dimension is the tracial observable
			$$
				F_N(\lambda) = d_s(t)(\lambda) = 2t\frac{\sum (m^2 + (\lambda_i - \lambda_j)^2) e^{-t (m^2 + (\lambda_i - \lambda_j)^2)}}{\sum e^{-t (m^2 + (\lambda_i - \lambda_j)^2)}},
			$$
			so it has $f(x, y) = \frac{x}{y}$ and constituent functions $f_1(x, y) = (m^2 + (x-y)^2) e^{-t(m^2 + (x-y)^2)}$, $f_2(x, y) = e^{-t(m^2 + (x-y)^2)}$ which are certainly bounded for $t > 0$.
			In order to apply Theorem \ref{thm:convergence_of_tracial_observables} we need to establish that $f$ is continuous at all possible points $(\langle f_1, \nu \rangle, \langle f_2, \nu \rangle)$ with $\nu$ as a probability measure, and that $F_N$ has sufficient decay compared to $e^{-u(x)}$.

			Starting with this last point, it is straightforward to check that 
			$$
				F_N(\lambda) \leq 2t \sum (m^2 + (\lambda_i - \lambda_j)^2) = (\Sym_N 2t(m^2 + (x-y)^2))(\lambda),
			$$
			so that we can take $g(x, y) = 2t(m^2 + (x-y)^2)$.
			From the proof of Proposition \ref{prop:our_model_satisfies_assumptions} we know that $u(x) = \frac{1}{2}ax^2$, and certainly
			$$
				\int_{\R^2} 2t(m^2 + (x-y)^2) e^{-\frac{1}{2}a(x^2 + y^2)} dx dy < \infty.
			$$

			Next we prove the continuity of $f$. 
			For this we need to establish that $\langle f_2, \nu \rangle \neq 0$ if $\nu$ is a probability measure.
			If $\nu$ is a probability measure there is some $R$ such that $\int_{-R}^R d\nu \geq \frac{1}{2}$.
			Then we have
			\begin{align*}
				\langle f_2, \nu \rangle & = \int_{\R^2} e^{-t(m^2 + (x-y)^2)} d\nu(x)d\nu(y), \\
				& \geq \int_{[-R, R]^2} e^{-t(m^2 + (x-y)^2)} d\nu(x)d\nu(y), \\
				& \geq \int_{[-R, R]^2} e^{-t(m^2 + 4R^2)} d\nu(x)d\nu(y), \\
				& \geq \frac{1}{4} e^{-t(m^2 + 4R^2)} > 0.
			\end{align*}

			The same strategy works for the spectral variance.
		\end{proof}

	\subsection{Proof of Theorem \ref{thm:convergence_of_tracial_observables}}
	\label{subsec:proof_that_eq_measure_is_spec_density}

		The main gadget involved in this proof is the sequence of sets
		\begin{equation}
			A_{N, \eta} := \left\{ x \in \R^N \, \middle| \, (\Sym_N U)(x) - (\Per_N U_{Vdm})(x) \leq E + \eta \right\}
			\label{eq:def_of_A}
		\end{equation}
		with $E = \inf I(\mu)$.

		\begin{lemma}
			The sets $A_{N, \eta}$ are non-empty and compact.
		\end{lemma}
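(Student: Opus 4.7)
The plan is to treat closedness, boundedness, and non-emptiness of $A_{N,\eta}$ separately.

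For closedness, I would appeal to lower semi-continuity. The map $x \mapsto (\Sym_N U)(x)$ is continuous by Assumption \ref{ass:interaction_requirements}(1), while $-(\Per_N U_{Vdm})(x) = -\frac{\beta}{4N(N-1)}\sum_{i\neq j}\log((x_i-x_j)^2)$ is continuous off the diagonals $\{x_i = x_j\}$ and equals $+\infty$ on them, hence is lower semi-continuous as an $\R \cup \{+\infty\}$-valued function. The sum is lsc and $A_{N,\eta}$, as its sublevel set at level $E+\eta$, is closed.

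For boundedness, I would derive a coercive estimate from Assumption \ref{ass:interaction_requirements}(2). The symmetry of $U$ upgrades the pointwise bound $U \geq u(x_1)$ to $U(x_1,\ldots,x_k) \geq \frac{1}{k}\sum_j u(x_j)$ by averaging over permutations, whence a simple counting argument yields $(\Sym_N U)(x) \geq \frac{1}{N}\sum_i u(x_i)$. From the elementary inequality $(x-y)^2 \leq 2(1+x^2)(1+y^2)$ one obtains
\[
(\Per_N U_{Vdm})(x) \leq \frac{\beta}{4}\log 2 + \frac{\beta}{2N}\sum_i \log(1+x_i^2),
\]
so that
\[
(\Sym_N U)(x) - (\Per_N U_{Vdm})(x) \geq \frac{1}{N}\sum_i \left(u(x_i) - \frac{\beta}{2}\log(1+x_i^2)\right) - \frac{\beta}{4}\log 2.
\]
Since $\frac{\beta}{2} \leq \frac{\max(\beta,2)}{2}$, Assumption \ref{ass:interaction_requirements}(2b) guarantees each summand is bounded below and tends to $+\infty$ as $|x_i| \to \infty$, so the constraint $(\Sym_N U)(x) - (\Per_N U_{Vdm})(x) \leq E + \eta$ forces a uniform bound on each $|x_i|$.

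For non-emptiness, I would exhibit an explicit configuration using the equilibrium measure $\mu_E$ provided by Theorem \ref{thm:equilibrium_measure_exists}. Since $\supp \mu_E$ is compact and $I(\mu_E) = E$, I would choose the quantile points $x_i^{(N)} = F^{-1}(i/(N+1))$ where $F$ is the CDF of $\mu_E$, so that the empirical measures $\mu_N = \frac{1}{N}\sum_i \delta_{x_i^{(N)}}$ converge weakly to $\mu_E$. The task then reduces to showing $(\Sym_N U)(x^{(N)}) - (\Per_N U_{Vdm})(x^{(N)}) \to E$ as $N \to \infty$; convergence of the first term is immediate from continuity of $U$ and compactness of the support. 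The main obstacle is the Vandermonde term, whose logarithmic singularity prevents a direct weak-convergence argument. I plan to split $\sum_{i\neq j}$ into a near-diagonal band $|i-j| \leq M$ and a far-diagonal remainder. Bulk quantile spacing gives $|x_i^{(N)} - x_j^{(N)}| \gtrsim |i-j|/N$, so the band contributes at most $O(MN\log N)$, which vanishes after normalization by $N(N-1)$; the far-diagonal part is handled by weak convergence of $\mu_N^{\otimes 2} \to \mu_E^{\otimes 2}$ against $\log$ restricted to a set where it is bounded and continuous. Combining these limits yields $x^{(N)} \in A_{N,\eta}$ for all $N$ sufficiently large, proving non-emptiness.
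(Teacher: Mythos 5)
Your treatment of closedness and boundedness is correct and is essentially the paper's argument: the coercive bound
$(\Sym_N U)(x) - (\Per_N U_{Vdm})(x) \geq \frac{1}{N}\sum_i\bigl(u(x_i) - \frac{\beta}{2}\log(1+x_i^2)\bigr) - \frac{\beta}{4}\log 2$,
obtained from $U \geq u(x_1)$ together with $(x-y)^2 \leq 2(1+x^2)(1+y^2)$, is exactly the paper's computation (the paper uses the sharper $(x-y)^2 \leq (1+x^2)(1+y^2)$ and so avoids the $\log 2$, which is immaterial), and the lower semicontinuity argument for closedness is fine, where the paper simply asserts closedness.

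The non-emptiness argument is where you genuinely diverge from the paper, and it has two gaps. First, the spacing estimate $|x_i^{(N)} - x_j^{(N)}| \gtrsim |i-j|/N$ for quantile points requires the density of $\mu_E$ to be bounded above; Theorem \ref{thm:equilibrium_measure_exists} only provides a compactly supported probability measure, and finiteness of the logarithmic energy does not force absolute continuity, let alone a bounded density. Without that bound your $O(MN\log N)$ control of the near-diagonal band is unjustified. (This could be repaired by discretizing a smooth approximant $\phi\,dx$ with $I(\phi\,dx) < E + \eta/2$ instead of $\mu_E$ itself, in the spirit of what the paper does in Lemma \ref{lem:expectation_values_restrict}.) Second, even once repaired, your construction only places points in $A_{N,\eta}$ for $N$ sufficiently large (depending on $\eta$), which is weaker than the stated claim and than what Lemma \ref{lem:measures_for_AN_converge} uses, namely $y_N \in A_{N,\eta}$ for all $N \geq k$. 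The paper sidesteps all of this with a one-line averaging argument: the mean of the $N$-point energy $(\Sym_N U) - (\Per_N U_{Vdm})$ with respect to $\mu_E^{\otimes N}$ is $I(\mu_E) = E$, and a function cannot exceed its mean everywhere, so some point of $(\supp \mu_E)^N$ has energy $\leq E \leq E + \eta$. That route is both shorter and yields the statement for every $N$; I would adopt it in place of the quantile construction.
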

		\begin{proof}
			By Theorem \ref{thm:equilibrium_measure_exists} there exists a measure $\mu_E$ such that
			$$
				E = I(\mu_E) = \int_{\R^k} U(x) - (\Per_k U_{Vdm})(x) d \mu_E^{\otimes k}(x) = \int_{\R^N} (\Sym_N U)(x) - (\Per_N U_{Vdm})(x) d \mu_E^{\otimes N}(x)
			$$
			hence there is a point in $(\supp \mu_E)^N$ such that $(\Sym_N U)(x) - (\Per_N U_{Vdm})(x) \leq E$.

			These sets are bounded, since
			\begin{align*}
				(\Sym_N U)(x) - (\Per_N U_{Vdm})(x) & = \frac{1}{N^k} \sum_{i:[k] \to [N]} U(x_{i(1)}, \ldots, x_{i(k)}) - \frac{1}{N(N-1)} \sum_{i:[2] \into [N]} \frac{\beta}{2} \log(|x_{i(1)} - x_{i(2)}|), \\
				& \geq \frac{1}{N^k} \sum_{i:[k] \to [N]} u(x_{i(1)}) - \frac{1}{N(N-1)} \sum_{i:[2] \into [N]} \frac{\beta}{2} \left(\log\left(\sqrt{1+x_{i(1)}^2}\right) + \log\left(\sqrt{1+x_{i(2)}^2}\right)\right), \\
				& \geq \frac{1}{N} \sum_{i=1}^N u(x_i) - \frac{1}{N} \sum_{i=1}^N \frac{\beta}{2} \log\left(1+x_i^2\right), \\
				& = \frac{1}{N} \sum_{i=1}^N \left(u(x_i) - \frac{\beta}{2} \log(1+x_i^2)\right)
			\end{align*}
			and the summands tend to go to infinity by point 2 of assumption \ref{ass:interaction_requirements}. The sets $A_{N, \eta}$ are also closed, hence they are compact.
		\end{proof}

		The first step is to show that expectation values of $e^{u(x)}$-bounded observables are determined by their behaviour on $A_{N, \eta}$.
		This is a large deviation type result which was proven for non-interacting polynomial potentials in \cite{johansson1998fluctuations}.

		\begin{lemma}
			\label{lem:expectation_values_restrict}
			Suppose that $\{F_N\}_N$ is a sequence of observables that is bounded by a function $g:\R^m \to \R$, in the sense that for all $N$ and $x \in \R^N$ we have $|F_N(x)| \leq (\Sym_N g)(x)$, and $g$ satisfies
			$$
				\int_{\R^m} g(x) e^{-\sum_{i=1}^m u(x_i)} d^m x < \infty
			$$
			and is symmetric under permutation of its arguments.
			Then,
			$$
				\int_{\R^N \setminus A_{N, \eta}} F_N(x) p_N(x) d^N x \to 0.
			$$
		\end{lemma}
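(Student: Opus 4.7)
The plan is a \emph{restrict and compare} argument in the spirit of \cite{johansson1998fluctuations}. I would begin by rewriting the exponent of $p_N(x)$ as
$$
-N^2(\Sym_N U)(x) + N(N-1)(\Per_N U_{Vdm})(x) = -N(N-1)\bigl((\Sym_N U)(x) - (\Per_N U_{Vdm})(x)\bigr) - N(\Sym_N U)(x),
$$
so that on $A_{N,\eta}^c$ the first summand is bounded above by $-N(N-1)(E+\eta)$. For the second summand, the coercivity $U(x_1,\ldots,x_k) \geq u(x_1)$ combined with permutation invariance of $U$ gives $N(\Sym_N U)(x) \geq \sum_{i=1}^N u(x_i)$, by counting the number of maps $i:[k]\to[N]$ with prescribed $i(1)$. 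Combined with the pointwise bound $|F_N| \leq \Sym_N g$, this produces the estimate
$$
\int_{A_{N,\eta}^c} |F_N(x)|\, e^{-N^2(\Sym_N U)(x) + N(N-1)(\Per_N U_{Vdm})(x)}\, dx \leq e^{-N(N-1)(E+\eta)} \int_{\R^N} (\Sym_N g)(x)\, e^{-\sum_i u(x_i)}\, dx.
$$

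The right-hand integral is handled by expanding $\Sym_N g$ as the average over maps $i:[m]\to[N]$ and applying Fubini. Injective $i$ contribute $\int_{\R^m} g(y) e^{-\sum_j u(y_j)}\, dy \cdot C_u^{N-m}$, where $C_u := \int_\R e^{-u(z)}\, dz$; non-injective $i$ with $s<m$ distinct values produce strictly smaller contributions of order $N^{s-m} C_u^{N-s}$. The integrability hypothesis on $g$ therefore yields a bound $\int (\Sym_N g) e^{-\sum_i u(x_i)}\, dx \leq K \cdot C_u^{N-m}$ for some $K$ independent of $N$.

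The remaining step, and the main obstacle, is a matching lower bound of the form $Z_N \geq c_1^N e^{-N(N-1)E - o(N^2)}$. To get this I would use the quantile discretization $x_1^* < \cdots < x_N^*$ of the compactly supported equilibrium measure $\mu_E$ given by Theorem \ref{thm:equilibrium_measure_exists}, defined by $\mu_E((-\infty, x_i^*]) = i/(N+1)$, and restrict the $Z_N$ integral to a product of intervals of length $\sim 1/N^2$ around the $x_i^*$. Standard Riemann-sum estimates give $(\Sym_N U)(x^*) = \langle U, \mu_E\rangle + o(1)$, while the consecutive spacing $x_{i+1}^* - x_i^* \gtrsim 1/N$ keeps the $1/N^2$-neighborhoods disjoint and tames the logarithmic singularity in $(\Per_N U_{Vdm})(x^*) = \langle U_{Vdm}, \mu_E\rangle + o(1)$. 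The neighborhood volume contributes the factor $c_1^N$. Combining the upper and lower bounds, the ratio $\int_{A_{N,\eta}^c}|F_N|\, p_N\, dx$ is dominated by $K' C^N e^{-N(N-1)\eta}$, which tends to zero since $\eta > 0$.

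The principal difficulty is the log-Vandermonde estimate for $(\Per_N U_{Vdm})(x^*)$: one must control the summed contribution of close pairs $x_i^*, x_j^*$ and show the Riemann-sum error is $o(1)$ uniformly in the portion of the discretization lying near $\partial \supp \mu_E$, where the density of $\mu_E$ may vanish. These checks are essentially the same as in \cite[Section 3]{verhoeven2023thesis}, and for our class of models they go through because the compact support and regularity of $\mu_E$ follow from the convexity and coercivity built into Assumption \ref{ass:interaction_requirements}.
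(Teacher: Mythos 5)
Your argument follows the same skeleton as the paper's proof: the same rearrangement of the exponent, the same use of the defining inequality of $A_{N,\eta}$ on its complement to extract the factor $e^{-N(N-1)(E+\eta)}$, the same coercivity bound $N(\Sym_N U)(x) \geq \sum_i u(x_i)$, and the same Fubini factorization to control the residual one-body integral by $Ce^{cN}$. The one place you genuinely diverge is the partition-function lower bound $Z_N \geq e^{-N^2E - o(N^2)}$. The paper gets the equivalent estimate $1/Z_N \leq e^{N^2(E+\eta/2)}$ by Johansson's route (via \cite[Prop.~3.2.13]{verhoeven2023thesis}): replace the minimizer by a smoothed, absolutely continuous near-minimizer $\phi\,dx$ and apply Jensen's inequality, which tames the logarithmic singularity automatically because $\iint \log|x-y|\,d\mu\,d\mu$ is finite for compactly supported absolutely continuous $\mu$. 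You instead propose a quantile discretization of $\mu_E$ itself. That route is also standard and spares you the smoothing/continuity lemma for $I$, but it costs you exactly the close-pair and edge estimates you flag in your last paragraph, plus one you do not flag: the spacing bound $x_{i+1}^*-x_i^* \gtrsim 1/N$ requires the density of $\mu_E$ to be bounded above, and Theorem \ref{thm:equilibrium_measure_exists} under Assumption \ref{ass:interaction_requirements} only gives existence, uniqueness and compact support. The natural repair is to discretize not $\mu_E$ but a smoothed measure $\nu$ with $I(\nu)<E+\eta/2$ and bounded density --- at which point you have reintroduced the paper's smoothing step. Finally, a shared imprecision worth noting: in the factorization of $\int (\Sym_N g)e^{-\sum_i u(x_i)}$ both you and the paper really only account for the injective maps $[m]\to[N]$; the diagonal terms require integrability of $g$ with repeated arguments against $e^{-u}$, which does not formally follow from the stated hypothesis on $g$, though it holds for every observable used in the paper.
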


		\begin{proof}
			Let $\eta > 0$ be arbitrary and $d\mu(x) = \phi(x)dx$ a measure with $I(\mu) < E + \eta$. This exists by smoothing the minimizer and a continuity result for $I$ (\cite[Lemma 3.2.12]{verhoeven2023thesis}).

			Following the proof of \cite[Prop. 3.2.13]{verhoeven2023thesis} (Lemma 4.2 in \cite{johansson1998fluctuations}), we can show that
			$$
				\frac{1}{Z_N} \leq e^{N^2(E + \frac{\eta}{2})}.
			$$

			This implies that
			\begin{align*}
				\left|\int_{\R^N \setminus A_{N, \eta}} (\Sym_N g)(x) p_N(x) d^N x\right| & = \left|\frac{1}{Z_N} \int_{\R^N \setminus A_{N, \eta}} F_N(x) e^{-S_N(x)} d^N x \right|, \\
				& \leq e^{N^2(E + \frac{\eta}{2})} \left|\int_{\R^N\setminus A_{N, \eta}} (\Sym_N g)(x) e^{-N^2 (\Sym_N U)(x) + N(N-1)(\Per_N U_{Vdm})(x)} d^N x\right|, \\
				& \leq e^{N^2(E + \frac{\eta}{2})} \left|\int_{\R^N\setminus A_{N, \eta}} (\Sym_N g)(x) e^{-N (\Sym_N U)(x) - N(N-1)(E+\eta))(x)} d^N x\right|, \\
				& = e^{-\frac{\eta}{2}N^2 + (E + \eta)N} \left|\int_{\R^N \setminus A_{N, \eta}} (\Sym_N g)(x) e^{-N (\Sym_N U)(x)} d^N x\right|.
			\end{align*}
			
			The remaining integral can be bounded by
			\begin{align*}
				\int_{\R^N \setminus A_{N, \eta}} (\Sym_N g)(x) e^{-N (\Sym_N U)(x)} d^N x & \leq \int_{\R^N} (\Sym_N g)(x) e^{-N(\Sym_N u)(x)} d^N y, \\
				& = \int_{\R^m} g(y) e^{-\sum_{i=1}^m u(y_i)} d^m y \int_{\R^{N-m}} e^{-\sum_{i=1}^{N-m} u(z_i)} d^{N-m}z, \\
				& = \frac{\int g(y) e^{-\sum u(y_i)} d^m y}{\left(\int e^{-u(z)} dz\right)^m} \left(\int_\R e^{-u(z)} dz\right)^N,
			\end{align*}
			which by the assumption on $g$ and the growth condition on $u$ can be written as $C e^{c N}$ for some finite real numbers $c, C$.

			Thus 
			$$
				\int_{\R^N \setminus A_{N, \eta}} (\Sym_N g)(x) p_N(x) d^N x \leq C e^{-\frac{\eta}{2}N^2 + (E + \eta + c)N} \to 0,
			$$
			and this implies the statement for $F_N$.
		\end{proof}

		\begin{lemma}
			Let $\{y_N\}_{N = k}^\infty$ be a sequence of points such that $y_N \in A_{N, \eta}$.
			Then the sequence of measures 
			$$
				\left\{\nu_N = \frac{1}{N} \sum_{i=1}^N \delta_{y_i}\right\}
			$$
			has a weak limit point $\nu_\eta$ with $I(\nu_\eta) \leq E + \eta$.
			\label{lem:measures_for_AN_converge}
		\end{lemma}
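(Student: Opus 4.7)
The strategy is the standard two-step lower-semicontinuity argument for logarithmic energy functionals. First I would establish that the sequence $\{\nu_N\}$ is tight, extract a weakly convergent subsequence by Prokhorov's theorem, and then pass to the limit in the defining inequality of $A_{N,\eta}$, handling the logarithmic singularity of $U_{Vdm}$ by truncation.

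For tightness, I would reuse the pointwise bound derived in the previous lemma, namely
\[
    (\Sym_N U)(y_N) - (\Per_N U_{Vdm})(y_N) \geq \frac{1}{N}\sum_{i=1}^N \left(u(y_i) - \frac{\beta}{2}\log(1 + y_i^2)\right) = \int_\R \left(u - \tfrac{\beta}{2}\log(1+x^2)\right) d\nu_N.
\]
Since $y_N \in A_{N,\eta}$ the left-hand side is bounded above by $E+\eta$, and by Assumption \ref{ass:interaction_requirements}(2b) the integrand tends to $+\infty$ as $|x|\to\infty$. This uniformly controls the mass of $\nu_N$ outside large balls, yielding tightness. Prokhorov then provides a subsequence (still denoted $\nu_N$) converging weakly to some probability measure $\nu_\eta$.

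The delicate step is passing to the limit in the functional. The polynomial/continuous part $U$ is continuous and bounded below by $u$, so approximating from below by the bounded continuous functions $\min(U, M)$ and using weak convergence together with monotone convergence in $M$ gives the lower-semicontinuity bound
\[
    \liminf_{N\to\infty}\int_{\R^k} U \, d\nu_N^{\otimes k} \geq \int_{\R^k} U \, d\nu_\eta^{\otimes k}.
\]
For the Vandermonde term, I would introduce the upper truncation $U_{Vdm}^M(x,y) := \max(U_{Vdm}(x,y), -M)$, which is continuous and bounded, and satisfies $U_{Vdm}^M \geq U_{Vdm}$. Since
\[
    (\Per_N U_{Vdm}^M)(y_N) - \int U_{Vdm}^M \, d\nu_N^{\otimes 2} = O(1/N)
\]
by a direct comparison of the off-diagonal sum and the full double integral, weak convergence gives $\lim_{N\to\infty}(\Per_N U_{Vdm}^M)(y_N) = \int U_{Vdm}^M \, d\nu_\eta^{\otimes 2}$. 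Combining these with the bound $(\Sym_N U)(y_N) - (\Per_N U_{Vdm}^M)(y_N) \leq (\Sym_N U)(y_N) - (\Per_N U_{Vdm})(y_N) \leq E + \eta$ yields
\[
    \int U \, d\nu_\eta^{\otimes k} - \int U_{Vdm}^M \, d\nu_\eta^{\otimes 2} \leq E + \eta.
\]
Finally, letting $M \to \infty$ so that $U_{Vdm}^M \downarrow U_{Vdm}$, monotone convergence turns this into $I(\nu_\eta) \leq E + \eta$, as desired.

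The main obstacle is the negative logarithmic singularity of $U_{Vdm}$ on the diagonal, which prevents direct use of weak convergence on the Vandermonde term; the truncation from above is what makes the argument go through, since on the sequence side $\Per_N$ already excludes the diagonal so truncation can only worsen the bound, while on the limit side monotone convergence restores the true Vandermonde energy.
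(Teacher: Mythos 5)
Your overall architecture (tightness via the pointwise bound from the previous lemma, Prokhorov, then truncation and semicontinuity) is the same as the paper's, and the tightness step is fine. However, there is a genuine gap in your treatment of the Vandermonde term. You assert that $U_{Vdm}^M(x,y) = \max(U_{Vdm}(x,y), -M)$ is ``continuous and bounded,'' but it is not bounded: $U_{Vdm}(x,y) = \frac{\beta}{4}\log((x-y)^2)$ tends to $+\infty$ as $|x-y|\to\infty$, and the truncation from below does nothing about this. Your truncation removes the $+\infty$ singularity of $-U_{Vdm}$ on the diagonal, but not its divergence to $-\infty$ at spatial infinity. As a consequence, two steps fail as written: (i) the claim that weak convergence gives $\lim_N (\Per_N U_{Vdm}^M)(y_N) = \int U_{Vdm}^M\, d\nu_\eta^{\otimes 2}$ — the inequality you actually need, $\limsup_N (\Per_N U_{Vdm}^M)(y_N) \leq \int U_{Vdm}^M\, d\nu_\eta^{\otimes 2}$, is the portmanteau bound for upper semicontinuous functions bounded \emph{above}, and $U_{Vdm}^M$ is not bounded above; and (ii) the $O(1/N)$ comparison between $\Per_N U_{Vdm}^M$ and the full double integral against $\nu_N^{\otimes 2}$, whose off-diagonal coefficient mismatch requires a uniform bound on $\frac{1}{N}\sum_i \log(1+y_{N,i}^2)$ that you have not established.

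The missing idea — which is exactly the paper's device — is to redistribute the weight function $l(x) = \frac{\beta}{2}\log(1+x^2)$ between the two terms. Since $U_{Vdm}(x,y) \leq \frac{1}{2}(l(x)+l(y))$, one writes $U - \Per_k U_{Vdm} = (U - \Sym_k l) + (\Per_k l - \Per_k U_{Vdm})$, where the first summand is continuous and bounded below by Assumption \ref{ass:interaction_requirements}(2b) and the second is nonnegative and lower semicontinuous. Each piece then passes to the limit under weak convergence by the standard monotone truncation from above, and the truncation level $L$ simultaneously controls the non-injective (diagonal) terms when converting the full symmetrization $\Sym_N \Per_k U_{Vdm}$ into the injective one $\Per_N U_{Vdm}$ needed to invoke the definition of $A_{N,\eta}$. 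Without some such compensation of the logarithmic growth of $U_{Vdm}$ at infinity by the growth of $U$, your argument does not close.
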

		\begin{proof}
			The sequence of measures $\{\nu_N\}$ is tight. Indeed, choose any $\epsilon > 0$. Let $b$ be a lower bound for $u(x) - \frac{\beta}{2}\log(1 + x^2)$ and $B$ such that $(1 - \epsilon)b + \epsilon B > E + \eta$.
			There is some $R$ such that if $|x| > R$ then $u(x) - \frac{\beta}{2}\log(1+x^2) > B$. Then at most $\epsilon N$ of the coordinates in any $x \in A_{N, \eta}$ lie outside of $[-R, R]^N$ so that at most a mass $\epsilon$ lies outside of $[-R, R]$ for $\nu_N$.
			This implies that the sequence $\{\nu_N\}$ has at least one weak limit point.

			Any such limit point $\nu$ has minimal energy. Let $l(x) = \frac{\beta}{2}\log\left(1+x^2\right)$ and note that for functions of a single argument $\Per_k$ and $\Sym_k$ are the same.
			\begin{align*}
				I(\nu) & = \lim_{L \to \infty} \lim_{N \to \infty} \int_{\R^k} \min(L, U(x) - (\Per_k U_{Vdm})(x)) d \nu_N^{\otimes k}(x), \\
				& = \lim_{L \to \infty} \lim_{N \to \infty} N^{-k} \sum_{i:[k] \to [N]} \min\left(L, U - \Per_k U_{Vdm}\right)(y_{N,i}), \\
				& \leq \lim_{L \to \infty} \lim_{N \to \infty} N^{-k} \sum_{i:[k] \to [N]} U(y_{N,i}) - \Per_k l + \min\left(L-b, \Per_k l - \Per_k U_{Vdm}\right)(y_{N,i}), \\
				& \leq \lim_{L \to \infty} \lim_{N \to \infty} N^{-k} \sum_{i:[k] \to [N]} (U(y_{N,i}) - \Sym_k l) + N^{-k} \sum_{i:[k]\into[N]} (\Per_k l - \Per_k U_{Vdm})(y_{N,i}) + N^{-k} \sum_{\substack{i:[k] \to [N] \\ i\text{ not injective}}} L-b, \\
				& \leq \lim_{L \to \infty} \lim_{N \to \infty} N^{-k} \sum_{i:[k] \to [N]} (U(y_{N,i}) - \Sym_k l) + \frac{(N-k)!}{N!} \sum_{i:[k]\into[N]} (\Per_k l - \Per_k U_{Vdm})(y_{N,i}), \\
				& \leq \lim_{L \to \infty} \lim_{N \to \infty} E + \eta.
			\end{align*}
			Here the function $l$ is ``moved'' between $U$ and $U_{Vdm}$ to ensure that the term involving $-U_{Vdm}$ is positive while the term involving $U$ is still bounded below. This allows us to move the $\min$ to the term involving $-U_{Vdm}$ and ensure that $N^{-k}$ may be replaced by the larger coefficient $(N-k)!/N!$. Together these steps replace the full symmetrization $\Sym_N \Per_k U_{Vdm}$ coming from the use of $\nu_N^{\otimes k}$ with the injective symmetrization $\Per_N \Per_k U_{Vdm} = \Per_N U_{Vdm}$, which allows the use of the definition of $A_{N, \eta}$.
		\end{proof}

		We are now ready to prove our main theorem on convergence of spectral observables.

		\begin{proof}[Proof of Theorem \ref{thm:convergence_of_tracial_observables}]

			Let $F_N = f\left(\Sym_N f_1, \ldots, \Sym_N f_n\right)$ be the tracial observable under consideration. Fix an $\eta > 0$.
			By Lemma \ref{lem:expectation_values_restrict} we have
			$$
				\limsup_{N \to \infty} \langle F_N \rangle_{p_N} = \limsup_{N \to \infty} \langle F_N \cdot \chi_{A_{N, \eta}} \rangle_{p_N}.
			$$
			Let $y_{N, \eta}$ be a point where $F_N$ achieves its maximum on $A_{N, \eta}$ and $\nu_{N, \eta}(x) = \frac{1}{N} \sum_{i} \delta(x - y_{N, \eta, i})$.
			We arrive at
			$$
				\limsup_{N \to \infty} \langle F_N \rangle_{p_N} \leq \limsup_{N \to \infty} F_N(y_{N, \eta}) = \limsup_{N \to \infty} f\left(\int_{\R^{m_1}} f_1(x) d\nu_{N, \eta}^{\otimes m_i}(x), \ldots \right).
			$$

			By Lemma \ref{lem:measures_for_AN_converge} applied to a subsequence realizing this $\limsup$ we get a weak limit point $\nu_\eta$ of $\{\nu_N\}$. This $\nu_\eta$ satisfies
			$$
				\limsup_{N \to \infty} \langle F_N \rangle_{p_N} \leq f\left(\langle f_1, \nu_{\eta} \rangle, \ldots, \langle f_n, \nu_{\eta} \rangle \right),
			$$
			using the continuity property of $f$ and the boundedness of the constituent functions $f_i$.
			Note that by said Lemma we also have $I(\nu_\eta) \leq E + \eta$.

			Letting $\eta \to 0$ we can construct a tight sequence of measures $\{\nu_\eta\}$ which therefore has a weak limit point $\nu$. 
			By weak lower semicontinuity of $I$ \cite[Lem. 3.2.8]{verhoeven2023thesis} we obtain $I(\nu) \leq E$ so that the uniqueness of the minimizer of $I$ implies that $\nu_\eta \weakto \mu_E$.
			Hence
			$$
				\limsup_{N \to \infty} \langle F_N \rangle_{p_N} \leq f\left(\langle f_1, \mu_E \rangle, \ldots, \langle f_n, \mu_E \rangle \right).
			$$
			Repeating this with $\liminf$ instead of $\limsup$ (and reversal of appropriate inequalities) we get
			$$
				\lim_{N \to \infty} \langle F_N \rangle_{p_N} = f\left(\langle f_1, \mu_E \rangle, \ldots, \langle f_n, \mu_E \rangle \right)
			$$
			so we obtain the desired convergence.
		\end{proof}

	\subsection{Computation of the equilibrium measure}
	\label{subsec:computation_of_eq_measure}

		By Theorems \ref{thm:equilibrium_measure_exists} and \ref{thm:convergence_of_tracial_observables} we can compute certain observables of our Dirac ensembles in the large $N$ limit using the equilibrium measure. The equilibrium measure minimizes the energy functional
		\begin{equation}
			I(\mu) = \int \left ( 2g_4 (x-y)^4 + 2g_2  (x-y)^2 - \frac{\beta_2}{4} \log\left(m^2 + (x-y)^2\right) - \frac{\beta}{4} \log\left((x-y)^2\right) + \frac{a}{2}xy\right ) d\mu(x) d\mu(y)
			\label{eq:energy_functional_expanded}
		\end{equation}
		and can be found using essentially the same techniques as in the single-trace case as found in for example \cite[Ch. 6.7]{deift1999orthogonal}. See also \cite{verhoeven2023thesis}.
		
		In order to find $\mu_E$ we assume that $\mu_E = \rho(x)dx$ for $\rho:\Sigma \to \R$ and $\Sigma \subset \R$ some compact subset.
		Since we know the minimizer is unique and the problem is symmetric under $x \mapsto -x$ we can further restrict our attention to symmetric functions $\rho$.
		The main step is then to transform the minimization problem for the functional in Equation \ref{eq:energy_functional_expanded} into the variational problem
		\begin{align*}
			\PV \int_\Sigma \frac{\rho(y)}{y-x} dy & = -\frac{2}{\beta} \frac{d}{dx} \int_\Sigma \left( 2g_4 (x-y)^4 + 2g_2  (x-y)^2 - \frac{\beta_2}{4} \log\left(m^2 + (x-y)^2\right) + \frac{a}{2}xy \right) \rho(y)dy, \\
			& = -\frac{2}{\beta} \int_\Sigma \left(8 g_4 (x-y)^3 + 4 g_2 (x-y) - \frac{\beta_2}{2} \frac{x-y}{m^2 + (x-y)^2} + \frac{a}{2}y \right) \rho(y) dy, \\
			& = -\frac{2}{\beta} \left(8g_4 x^3 + (24 g_4 \mu^E_2 + 4 g_2) x \right)- \frac{\beta_2}{2} \int_\Sigma \frac{x-y}{m^2 + (x-y)^2} \rho(y)dy .
		\end{align*}
		
		This equation for $\rho$ can be transformed further by using the Stieltjes transform and  the Sokhotski-Plemelj formula for Riemann-Hilbert problems. 
		Details can again be found in \cite{deift1999orthogonal,verhoeven2023thesis}.
		The result of these transformations are the following equations depending on the form of the support $\Sigma$:
		\begin{description}
			\item[{$\Sigma = [-a, a]$: }]
			\begin{subequations}
				\label{eq:one_cut_problem}
				\begin{align}
					& \rho(x) = \frac{\sqrt{a^2-x^2}}{\pi} \left(\frac{2}{\beta}  \left(8g_4 x^2 + \left(4g_4 a^2 + 24g_4 \mu^E_2 + 4g'_2\right)\right) + \frac{\beta_2}{\beta} \int_{\Sigma} K_{\Sigma, m}(x, y) \rho(y) d y\right), \label{subeq:one_cut_problem_rho} \\
					& \beta + \beta_2\left(1-\int_{\Sigma} R_{\Sigma, m}^1(y) \rho(y) d y\right) = 6g_4 a^4 + \left(24g_4\mu^E_2 + 4g'_2\right)a^2, \label{subeq:one_cut_problem_normalization} \\
					& \mu^E_2 = \int_{\Sigma} x^2 \rho(x) d x, \label{subeq:one_cut_problem_consistency}
				\end{align}
			\end{subequations}
			
			\item[{$\Sigma = [-a, -b] \cup [b, a]$: }]
			\begin{subequations}
				\label{eq:two_cut_problem}
				\begin{align}
					& \rho(x) = \frac{\sqrt{(x^2-b^2)(a^2-x^2)}}{\pi} \left(\frac{2}{\beta}  8 g_4 |x| + \frac{\beta_2}{\beta} \int_{\Sigma} K_{\Sigma, m}(x, y) \rho(y) d y\right), \label{subeq:two_cut_problem_rho} \\
					& 0 = \beta_2 \int_{\Sigma} R_{\Sigma, m}^0(y) \rho(y) d y + 8g_4\left(a^2+b^2\right) + \left(48g_4 \mu^E_2 + 8g_2'\right), \label{subeq:two_cut_problem_moment_zero} \\
					& \beta + \beta_2\left(1 - \int_{\Sigma} R_{\Sigma, m}^2(y) \rho(y) d y\right) = 2g_4\left(3a^4 + 2a^2b^2 + 3b^4\right) + \left(24g_4 \mu^E_2 + 4g_2'\right)\left(a^2+b^2\right), \label{subeq:two_cut_problem_normalization} \\
					& \mu^E_2 = \int_{\Sigma} x^2 \rho(x) d x, \label{subeq:two_cut_problem_consistency}
				\end{align}
			\end{subequations}
		\end{description}
		where
		\begin{align*}
			K_{\Sigma, m}(x, y) & = \Re\left(\frac{1}{\sqrt{s}(y+im)} \frac{1}{y+im-x}\right), \\
			R_{\Sigma, m}^n(y) & = \Re\left(\frac{(y+im)^n}{\sqrt{s}(y+im)}\right), \\
			\sqrt{s}(z) & = \left\{\begin{array}{l}
				\sqrt{z^2-a^2}, \\
				\sqrt{(z^2-a^2)(z^2-b^2)}.
			\end{array}\right.
		\end{align*}
		The square root is chosen such that $\sqrt{s}(x)$ is positive and real for $x > 0$ and that the function has branch cuts on $\Sigma$.

		These systems consist of a Fredholm integral equation for $\rho$, Equations \ref{subeq:one_cut_problem_rho}, \ref{subeq:two_cut_problem_rho}, and equations that can be used to determine the parameters of $\Sigma$, \ref{subeq:one_cut_problem_normalization}, \ref{subeq:two_cut_problem_moment_zero}, \ref{subeq:two_cut_problem_normalization}.
		Note that if $m = 0$ these equations simplify, since the integral term involving $K_{\Sigma, m}$ can be folded into the original principal value.
		This has the ultimate effect of removing the $K_{\Sigma, m}$ and $R_{\Sigma, m}$ terms from the equations and replacing $\beta$ by $\beta + \beta_2$; thus if $m = 0$, we obtain straightforward equations for $\rho$ and the parameters $a, b$ determining the support of $\rho$.
		
		It turns out that these cases for $\Sigma$ are exhaustive for quartic polynomials in $D$.
		The case $\Sigma = [-a, a]$ is referred to as the one-cut phase for the Dirac ensemble, and the case $\Sigma = [-a, -b] \cup [b, a]$ as the two-cut phase.
		These phases and their transitions are further explored in Section \ref{sec:spectral_properties}.
		When considering hexic or even higher degree potentials more of these ``spectral phases'' need to be considered. In general a polynomial of degree $2n$ can correspond to a support $\Sigma$ consisting of up to $n$ intervals, see also \cite{hessam2022noncommutative}.
		We keep our investigations to the quartic case for this paper since it already exhibits two spectral phases.

		The equilibrium measures for $m = 0$ are explored in Section \ref{subsec:estimators_massless}, after discussing the properties of Dirac ensembles that can be derived from these equilibrium measures.
		In Section \ref{subsec:estimators_non_zero_mass} we will numerically investigate these equilibrium measures for $m \neq 0$ and the effect on the properties of the Dirac ensembles caused by the presence of a mass.

\section{Spectral properties of the Dirac ensemble}
\label{sec:spectral_properties}

	One of the main tools to access geometric data encoded in the spectrum of a Dirac operator $D$ is the heat kernel 
	$$
		K_{D^2}(t) = \sum_{\lambda \in \spec(D)} e^{-t\lambda^2}.
	$$
	For the Dirac operator on a manifold $M$ of dimension $d$ this converges for $t > 0$ as the eigenvalues tend to approach infinity at a rate of $\lambda_n \sim n^{1/d}$ \cite{berline2003heat}.

	Moreover, as $t \to 0$ the heat kernel has an asymptotic expansion
	$$
		K_{D^2}(t) \sim_{t \to 0} t^{-d/2}(a_0 + a_2 t + a_4 t^2 + \ldots),
	$$
	where the $a_i$ are spectral invaraints. For example, $a_0 = (4\pi)^{-d/2} \vol(M)$ and higher $a_i$ involve the Riemann curvature and its derivatives \cite{gilkey2018invariance}.
	Using this heat kernel expansion is a way to define  geometric invariants such as dimension, volume, and total scalar curvature \cite{connes2000short,van2015noncommutative,Fathizadeh2019} for  spectral triples.

	However, for our models the spectral density remains bounded so that in the large $N$ limit the heat kernel is undefined for any $t$.
	Instead we can compute the large $N$ limit of the \emph{normalized heat kernel}
	$$
		k_{D^2}(t) := \lim_{N \to \infty} \langle \frac{1}{N^2} \sum_{\lambda \in \spec(D_N)} e^{-t\lambda^2} \rangle_{p_N}.
	$$
	This normalized heat kernel clearly lacks any divergence at $t = 0$ but it can still be used to attach geometric information to our ensembles of bounded Dirac operators by using different methods to extract geometric data from the spectrum.

	The two notions we will use in this paper are the spectral dimension and spectral variance (we collectively call these spectral estimators), as used in \cite{barrett2019spectral} for Monte-Carlo simulations of similar fuzzy geometries.
	For a Dirac operator $D$ with spectrum $\{\lambda_i\}_{i = 1}^N$, the spectral dimension at energy $t$ is defined by
	\begin{equation}
		d_s(t) := -2t \frac{d \log(k_{D^2}(t))}{dt} = 2t \frac{\sum \lambda_i^2 e^{-t \lambda_i^2}}{\sum e^{-t\lambda_i^2}}.
		\label{eq:spectral_dimension}
	\end{equation}
	By using the logarithmic derivative we can essentially sidestep the problem that we only have access to the normalized heat kernel, as the normalization is a constant shift of $\log k_{D^2}(t)$ versus $\log K_{D^2}(t)$. For manifolds $\lim_{t \to 0} d_s(t)$ gives the dimension, while the $t \to \infty$ limit is 0 if the space is compact and depends on the dimension if it is not \cite{barrett2019spectral}.

	Observe that in the case of a fermionic $(0, 1)$-fuzzy geometry with mass $m$ the spectrum of the full Dirac operator consists of $\{\pm \sqrt{m^2 + \lambda_i^2}\}$ for $\{\lambda_i\}$, the spectrum of the \emph{fuzzy} Dirac operator.
	In this case the spectral dimension can be computed to be
	\begin{align*}
		d_s(t) & = 2t \frac{2\sum (m^2 + \lambda_i^2) e^{-t (m^2 + \lambda_i^2)}}{2\sum e^{-t(m^2 + \lambda_i^2)}}, \\
		& = 2t \left(m^2 + \frac{\sum \lambda_i^2 e^{-t \lambda_i^2}}{\sum e^{-t\lambda_i^2}}\right), \\
		& \sim_{t \to \infty} 2tm^2
	\end{align*}
	because 0 is always an eigenvalue of $D = [H, \cdot]$.

	As noted in \cite{barrett2019spectral} it can be useful to further refine the spectral dimension to the spectral variance
	\begin{equation}
		v_s(t) := 2t^2 \frac{d^2 \log(k_{D^2}(t))}{dt^2} = 2t^2 \left(\frac{\sum \lambda_i^4 e^{-t \lambda_i^2}}{\sum e^{-t\lambda_i^2}} - \left(\frac{\sum \lambda_i^2 e^{-t \lambda_i^2}}{\sum e^{-t\lambda_i^2}}\right)^2\right).
	\end{equation}
	The spectral variance removes the linear behaviour for $t \to \infty$ from the spectral dimension caused by having a non-zero lowest eigenvalue. It was originally introduced to study dynamical triangulations \cite{ambjorn2005spectral} and has since been used in other models of quantum gravity \cite{carlip2017dimension}. For closed manifolds, just as with spectral dimension, the spectral variance limit as $t$ goes to infinity is zero. The general behaviour for manifolds of the spectral variance as $t$ goes to zero is harder to analyze.

	In the next sections we will compute these dimension estimators for the quartic type $(0, 1)$-fuzzy Dirac ensemble and explore how they are affected by various values of our coupling constants. When computing the spectral estimators $d_s(t)$ and $v_s(t)$ for an ensemble of Dirac operators it is important to clarify where the expectation value is taken.
	Corresponding to the discussion in \cite[Sec. D1]{barrett2019spectral} one can either compute the ``average spectrum'' and find the estimators for this average spectrum, or one can look for the expected value of the spectral estimators over the ensemble.
	In the setting of Monte-Carlo simulations, at a certain matrix size $N$ the average spectrum can be computed by recording the $n$-th eigenvalue (ordered from $-\infty$ to $\infty$, say) for every observed matrix in the Monte-Carlo run. The $n$-th average eigenvalue is then simply the average of these observed $n$-th eigenvalues.

	Theorem \ref{thm:convergence_of_tracial_observables} allows us to use the equilibrium measure to compute the expected values of the estimators in the large $N$ limit, since it tells us that (for $D = [H, \cdot]$, $\spec H = \{\lambda_i\}$)
	$$
		\lim_{N \to \infty} \langle d_s(t) \rangle = \lim_{N \to \infty} \langle 2t \frac{N^{-2} \sum (\lambda_i - \lambda_j)^2 e^{-t(\lambda_i - \lambda_j)^2}}{N^{-2} \sum e^{-t(\lambda_i - \lambda_j)^2}}\rangle = 2t \frac{\int_{\R^2} (x-y)^2 e^{-t(x-y)^2} d\mu_E(x) d\mu_E(y)}{\int_{\R^2} e^{-t(x-y)^2} d\mu_E(x) d\mu_E(y)}.
	$$
On the other hand the equilibrium measure is constructed as the weak limit of the expectation value of spectral densities, and as such is the large $N$ limit of the ``average spectrum'' if the average spectrum is taken to be the expected value of the spectral density.
	It is unclear at this time how this relates to the average spectral density for Monte-Carlo simulations in \cite{barrett2019spectral}.


	For flat space the spectral dimension and variance converge to the dimension of the space in the low-energy limit $t \to \infty$, as they do in the presence of a high-energy cutoff; see equations (14) and (19) of \cite{barrett2019spectral}.
	On compact spaces the low-energy limit is zero, as the corresponding long wavelengths no longer ``fit'' on the space.
	Generally the low-energy limit is determined by the smallest eigenvalues. In the compact case there is a smallest eigenvalue while in the non-compact case there is a non-discrete spectrum with no single smallest eigenvalue and the limit is determined by the density near 0.

	For a given spectral density the low-energy limit can be understood using the following elementary Lemma akin to the initial value theorem for the Laplace transform.
	\begin{lemma}
		\label{lem:initial_value_theorem}
		Suppose $f$, $g$ are functions such that $\int_\R f(x) dx$ is absolutely convergent, $g$ is bounded, and $\lim_{x \to 0} g(x) = L$ is finite.
		Then $\lim_{\alpha \to 0} \int_\R f(x) g(\alpha x) dx = L \int_\R f(x) dx$.
	\end{lemma}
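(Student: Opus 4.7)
The plan is to recognize this lemma as a direct application of the dominated convergence theorem, after reducing the continuous limit $\alpha \to 0$ to a statement about arbitrary sequences $\alpha_n \to 0$.

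First I would observe the pointwise limit of the integrand. For any fixed $x \in \R$ with $x \neq 0$, we have $\alpha x \to 0$ as $\alpha \to 0$, so by the hypothesis $\lim_{y \to 0} g(y) = L$ we get $g(\alpha x) \to L$ and hence $f(x) g(\alpha x) \to L f(x)$. The point $x = 0$ forms a null set and can be ignored, so the convergence holds almost everywhere with respect to Lebesgue measure on $\R$.

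Next I would produce an integrable dominating function. Since $g$ is bounded, choose $M$ with $|g(y)| \leq M$ for all $y \in \R$. Then for every $\alpha$ and every $x$,
\[
	|f(x) g(\alpha x)| \leq M |f(x)|,
\]
and $M|f|$ is integrable by the assumed absolute convergence of $\int_\R f(x)\,dx$. The dominated convergence theorem then gives, for any sequence $\alpha_n \to 0$,
\[
	\lim_{n \to \infty} \int_\R f(x) g(\alpha_n x)\,dx = \int_\R L f(x)\,dx = L \int_\R f(x)\,dx.
\]
Since this holds for every sequence $\alpha_n \to 0$, the continuous limit $\lim_{\alpha \to 0}$ exists and equals $L \int_\R f(x)\,dx$, which is the claim.

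There is no real obstacle here; the only subtlety worth mentioning is the passage from the sequential version of dominated convergence to the continuous parameter statement, which is handled by the standard sequential characterization of limits of functions of a real variable. I would also note explicitly, for the reader, that no continuity or regularity of $f$ is needed beyond integrability, and that $g$ need only be bounded measurable with a limit at the origin.
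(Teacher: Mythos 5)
Your proof is correct. The paper states this lemma without proof (describing it only as ``elementary''), and your dominated convergence argument --- pointwise convergence $g(\alpha x)\to L$ for $x\neq 0$, domination by $M|f|$, and the sequential characterization of the limit in $\alpha$ --- is exactly the standard argument the authors are implicitly invoking.
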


	Suppose $\lim_{x \to 0} x^{-a} \rho_{D}(x) = L$, $0 < L < \infty$, then we can use the above Lemma to compute
	\begin{align} 
		\lim_{t \to \infty} d_s(t) & = 2 \lim_{t \to \infty} \frac{\int_\R tx^2 e^{-tx^2} \rho_D(x) dx}{\int_\R e^{-tx^2} \rho_D(x) dx}, \nonumber \\
		& = 2 \lim_{t \to \infty} \frac{\int_\R s^{2+a} e^{-s^2} \left(\frac{s}{\sqrt{t}}\right)^{-a} \rho_D\left(\frac{s}{\sqrt{t}}\right) ds}{\int_\R s^a e^{-s^2} \left(\frac{s}{\sqrt{t}}\right)^{-a} \rho_D\left(\frac{s}{\sqrt{t}}\right) ds}, \nonumber \\
		& = 2 \lim_{\alpha \to 0} \frac{L}{L} \frac{\int_\R s^{2+a} e^{-s^2} ds}{\int_\R s^a e^{-s^2} ds}, \\
		& = 2 \lim_{t \to \infty} \frac{\frac{1}{2}(1+a)\int_\R s^a e^{-s^2} ds}{\int_\R s^a e^{-s^2} ds} = 1+a. \label{eq:low_energy_limit_spec_dim}
	\end{align}
	For the spectral variance of the same $\rho_D$ we can compute
	\begin{align}
		\lim_{t \to \infty} v_s(t) & = 2 \lim_{t \to \infty} \left(\frac{\int_\R t^2 x^4 e^{-tx^2} \rho_D(x) dx}{\int_\R e^{-tx^2} \rho_D(x) dx} - \left(\frac{\int_\R t x^2 e^{-tx^2} \rho_D(x) dx}{\int_\R e^{-tx^2} \rho_D(x) dx}\right)^2\right), \nonumber \\
		& = 2 \lim_{t \to \infty} \frac{\int_\R s^{4+a} e^{-s^2} \left(\frac{s}{\sqrt{t}}\right)^{-a} \rho_D\left(\frac{s}{\sqrt{t}}\right) ds}{\int_\R s^a e^{-s^2} \left(\frac{s}{\sqrt{t}}\right)^{-a} \rho_D\left(\frac{s}{\sqrt{t}}\right) ds} - \frac{1}{2}(1+a)^2, \nonumber \\
		& = \frac{1}{2}(3+a)(1+a) - \frac{1}{2}(1+a)^2 = 1+a. \label{eq:low_energy_limit_spec_var}
	\end{align}

	Hence if a spectral density has $\rho_D(x) \sim Cx^a$, as $x \to 0$ its spectral dimension and variance are $1 + a$ in the $t \to \infty$ limit.
	If $\rho_D(x) \equiv 0$ in some neighbourhood of zero one can set this as the mass-transformed density of another density $\rho_{D\setminus m}$ that does have a finite limit $\lim_{x \to 0} x^{-a} \rho_{D\setminus m}(x)$; see Equation \ref{eq:mass_transformation}. The effect of such a shift can be computed as in Section \ref{subsec:estimators_non_zero_mass}.

	\subsection{Quadratic and Quartic models with a massless fermion}
	\label{subsec:estimators_massless}

		In the massless case exact solutions for the spectral density can be found and therefore the spectral estimators can also be computed. This is in contrast to the massive case discussed in Section \ref{subsec:estimators_non_zero_mass} where the spectral density is numerically approximated.

		For $m=0$, Equations \ref{eq:one_cut_problem} and \ref{eq:two_cut_problem} simplify to a renormalized version of the fermionless model which can be solved as in \cite{khalkhali2020phase,hessam2022double,d2022numerical,verhoeven2023thesis}. 
		These solutions are, respectively, given by
		\begin{align*}
			& 0 = 12g_4^2 a^8 + 12(\beta+\beta_2)g_4a^4 + 4(\beta+\beta_2)g_2a^2 - (\beta+\beta_2)^2, \\
			& \mu_2 = \frac{2g_4a^6 + g_2 a^4}{\beta+\beta_2-6g_4a^4}, \\
			& \rho_H(x) = \frac{2}{(\beta+\beta_2)\pi} \left(8g_4 x^2 + 4g_4a^2 + 24g_4\mu_2 + 4g_2\right) \sqrt{a^2-x^2},
		\end{align*}
		and
		\begin{align*}
			& a^2 = -\frac{1}{8}\frac{g_2}{g_4} + \sqrt{\frac{\beta+\beta_2}{8g_4}}, \hspace{3em}
			\quad b^2 = -\frac{1}{8}\frac{g_2}{g_4} - \sqrt{\frac{\beta+\beta_2}{8g_4}}, \\
			&\rho_H(x) = \frac{2}{(\beta+\beta_2)\pi} 8g_4|x| \sqrt{(x^2-b^2)(a^2-x^2)},
		\end{align*}
		for the one- and two-cut phases respectively.
		In particular, this allows us to determine that the phase transition occurs on the hypersurface
		\[
			g_2 = -\sqrt{8g_4(\beta+\beta_2)}
		\]
		in $(g_2, g_4, \beta+\beta_2)$-space.
\subsubsection*{Gaussian}
		In the Gaussian case we have $g_4 = 0$ and $g_2 > 0$. 
		In this case there is no phase transition and the solution is always given by
		\begin{align*}
			& a^2 = \frac{\beta+\beta_2}{4g_2}, \\
			& \rho_H(x) = \frac{8g_2}{\pi (\beta + \beta_2)} \sqrt{a^2-x^2}.
		\end{align*}
	In order to compute the spectral dimension and variance from $\rho_H$ we first relate the corresponding normalized heat kernels $k_{D^2}$ and $k_H$. Using the gamma function identity
		\begin{equation*}
			\frac{1}{n!} = \frac{2}{\sqrt{2 \pi} }\int_{0}^{\infty} \frac{(\sqrt{2}s)^{2n}}{(2n)!} e^{-\frac{s^2}{2}} ds
		\end{equation*}
		we can relate the generating functions $k_{D^2}(t)$ to $k_{D}(t)$, as described in \cite{schmidt2017square}.
		Note that
		\begin{align*}
			k_{D^2}(-t) & = \lim_{N\rightarrow \infty} \sum_{n=0} ^{\infty} \langle \frac{1}{N^2} \tr D^{2n} \rangle_{p_N} \frac{t^{n}}{n!}, \nonumber \\
			& = \lim_{N\rightarrow \infty} \frac{2}{\sqrt{2 \pi} }\int_{0}^{\infty} \sum_{n=0} ^{\infty} \langle \frac{1}{N^2} \tr D^{2n} \rangle_{p_N} \frac{(\sqrt{2t}s )^{2n}}{(2n)!}e^{-\frac{s^2}{2}} ds, \nonumber \\
			& = \frac{1}{\sqrt{2 \pi} }\int_{0}^{\infty}\left(k_{D}\left(\sqrt{2t} s\right) + k_{D}\left(-\sqrt{2t} s\right)\right) e^{-\frac{s^2}{2}} ds,
		\end{align*}
		since
		\begin{align*}
			\frac{1}{2}(k_{D}(t) + k_{D}(-t)) & = \lim_{N \rightarrow \infty} \sum_{n=0} ^{\infty} \langle \frac{1}{N^2}\tr D^{2n} \rangle_{p_N} \frac{t^{2n}}{(2n)!}.
		\end{align*}
		
		Now consider the normalized heat kernels for $H$ and $D$. They may be related as follows
		\begin{align*}
			k_{D}(t) & = \lim_{N\rightarrow \infty} \sum_{n=0}^\infty \langle \frac{1}{N^2} \tr D^n \rangle_{p_N} \frac{(-t)^n}{n!}, \\
			& = \lim_{N\rightarrow \infty} \sum_{n=0}^{\infty}\sum_{j=0}^{n} \binom{n}{j} \langle \frac{1}{N} \tr H^{n-j} \rangle_{p_N} \langle \frac{1}{N} \tr H^{j} \rangle_{p_N} \frac{(-t)^n}{n!}, \\
			& = \lim_{N\rightarrow \infty} \left( \sum_{n=0}^{\infty} \langle \frac{1}{N}\tr H^{n} \rangle \frac{(-t)^n}{n!} \right)^2, \\
			& = k_{H}(t)^2.
		\end{align*}
		We used the convolution property of exponential series, the structure of the moments of $D$ in terms of $H$, and Theorem \ref{thm:convergence_of_tracial_observables} to factorize the expectation $\langle \tr H^{n-j} \tr H^j \rangle$. 
In summary, we have the integral formula
		\begin{equation}
			k_{D^2}(-t) = \frac{1}{\sqrt{2 \pi} }\int_{0}^{\infty}\left(k_{H}\left(\sqrt{2t} s\right)^2 + k_{H}\left(-\sqrt{2t} s \right)^2\right)e^{-\frac{s^2}{2}} ds.
			\label{eq:heat_kernel_D2_in_terms_of_H}
		\end{equation}
	
		Using this formula for $k_{D^2}$, it can computed explicitly in the quadratic case. Starting from the known spectral density, we obtain
		\begin{align*}
			k_{H}(t) &=\int e^{-x t} 	\frac{2 g_{2}}{\pi }\sqrt{a^2 - x^2}_{[-a,a]}dx \\
			 &= \frac{2\sqrt{ g_{2}}}{t} I_{1}\left( \frac{t}{\sqrt{ g_{2}}}\right).
		\end{align*}
		This leads to a normalized heat kernel for $D^2$ given by
		\begin{align*}
			k_{D^2}(-t) = \frac{2}{\sqrt{2 \pi} }\int_{0}^{\infty} \left(\frac{\sqrt{ g_{2}}}{\sqrt{2t}s} I_{1}\left( \frac{\sqrt{2t}s}{\sqrt{ g_{2}}}\right)\right)^2 e^{-\frac{s^2}{2}} ds =  \, _2F_2\left(\frac{1}{2},\frac{3}{2};2,3;\frac{4 t}{g_2}\right),
		\end{align*}
		where $ _2F_2$ denotes the generalized hypergeometric function. The spectral dimension can then be computed to be 
		\begin{equation*}
			d_{s}(t) = \frac{t}{g_2}\frac{\, _2F_2\left(\frac{3}{2},\frac{5}{2};3,4;\frac{-4 t}{g_2}\right)}{ \, _2F_2\left(\frac{1}{2},\frac{3}{2};2,3;\frac{-4 t}{g_2}\right)},
		\end{equation*}
		and the spectral variance as
		\begin{equation*}
			v(t) = \frac{t^2}{4 g_2^2}\frac{\left(5 \,
				_2F_2\left(\frac{1}{2},\frac{3}{2};2,3;-\frac{4 t}{g_2}\right)
				\, _2F_2\left(\frac{5}{2},\frac{7}{2};4,5;-\frac{4
					t}{g_2}\right)-2 \,
				_2F_2\left(\frac{3}{2},\frac{5}{2};3,4;-\frac{4
					t}{g_2}\right){}^2\right)}{\,
				_2F_2\left(\frac{1}{2},\frac{3}{2};2,3;-\frac{4
					t}{g_2}\right){}^2}.
		\end{equation*}
		
		Since the spectral density for $D$ is given by the convolution of the semicircular law with itself \cite{khalkhali2022spectral} and this is a smooth function with finite value at 0 the low-energy limit, $t \to \infty$, for both of these is one by Equations \ref{eq:low_energy_limit_spec_dim} and \ref{eq:low_energy_limit_spec_var} for all values of $g_{2}>0$.

		\begin{figure}[h!]
			\centering
			\includegraphics[width=\textwidth]{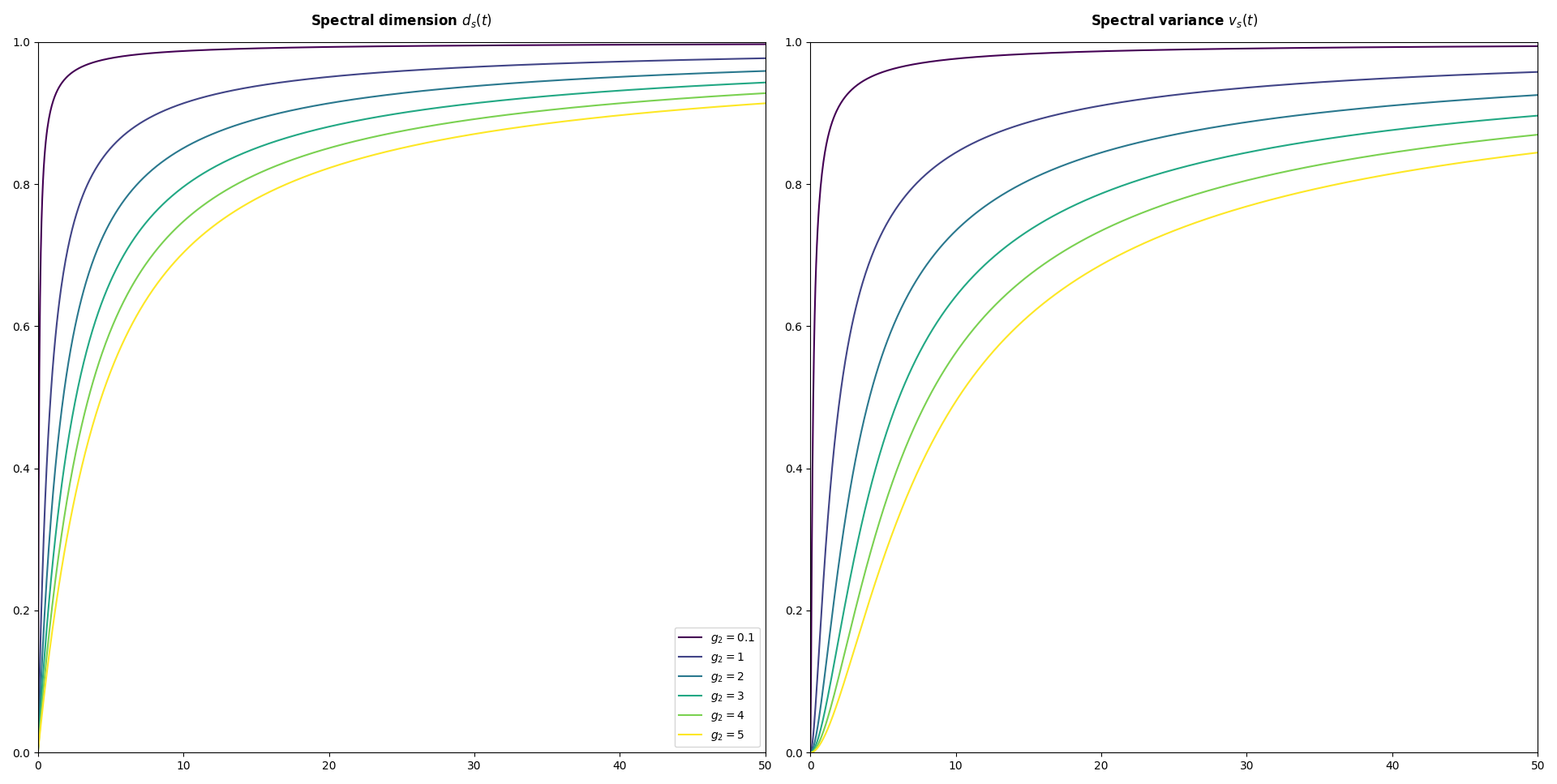}
			\caption{The spectral dimension and variance for the Gaussian massless fermionic $(0, 1)$-fuzzy Dirac ensemble with fixed $\beta = \beta_2 = 2$. The lighter colors correspond to higher values of $g_2$, ranging from 0.1 to 5.}
			\label{fig:estimators_for_quartic}
		\end{figure}

	\subsubsection*{Quartic}


		In the quartic model we proceed in the same way, starting with the one-cut phase.
		Using the known formula for $\rho_H$ we obtain
		\begin{align*}
			k_{H}(t) 
				& = \int_{\text{supp}} e^{-x t}\rho_{H}(x)dx \\
			  & = \frac{2 a}{t^2} \left(\frac{\left(3 a^2 g_{4} \left(a^4 g_{4}-2\right)-2 g_{2}\right) t}{3 a^4 g_{4}-2} I_1(a t) - 6 a g_{4} I_2(a t)\right).
		\end{align*} 

		The expected heat kernel of $D^{2}$ can then be computed using Equation \ref{eq:heat_kernel_D2_in_terms_of_H},
		\begin{align*}
			k_{D^2}(t) &= \int_{0}^{\infty}\frac{2 a^2 e^{-\frac{s^2}{2}}}{
				\sqrt{2 
			\pi} t^2 s^4}  \left(-\frac{\sqrt{2t} s \left(3 a^2 g_{4} \left(a^4 g_{4}-2\right)-2
				g_{2}\right)  J_1\left(\sqrt{2} a s \sqrt{t}\right)}{3 a^4 g_{4}-2}+6 a g_{4}
			J_2\left(\sqrt{2} a s \sqrt{t}\right)\right){}^2 ds
		\end{align*}

		The authors were unable to compute a closed-form expression for the above integral. However, the above form lends itself nicely to numerical integration, allowing us to produce the plots in Figure \ref{fig:quartic_onecut_estimators_massless}. It is clear from the figure that the low-energy limit of the spectral dimension is one. This can be seen to be the limit of both the spectral dimension of Equation \ref{eq:low_energy_limit_spec_dim} and the spectral variance of Eq. \ref{eq:low_energy_limit_spec_var}.

\begin{figure}[h!]
	\centering
	\includegraphics[width=\textwidth]{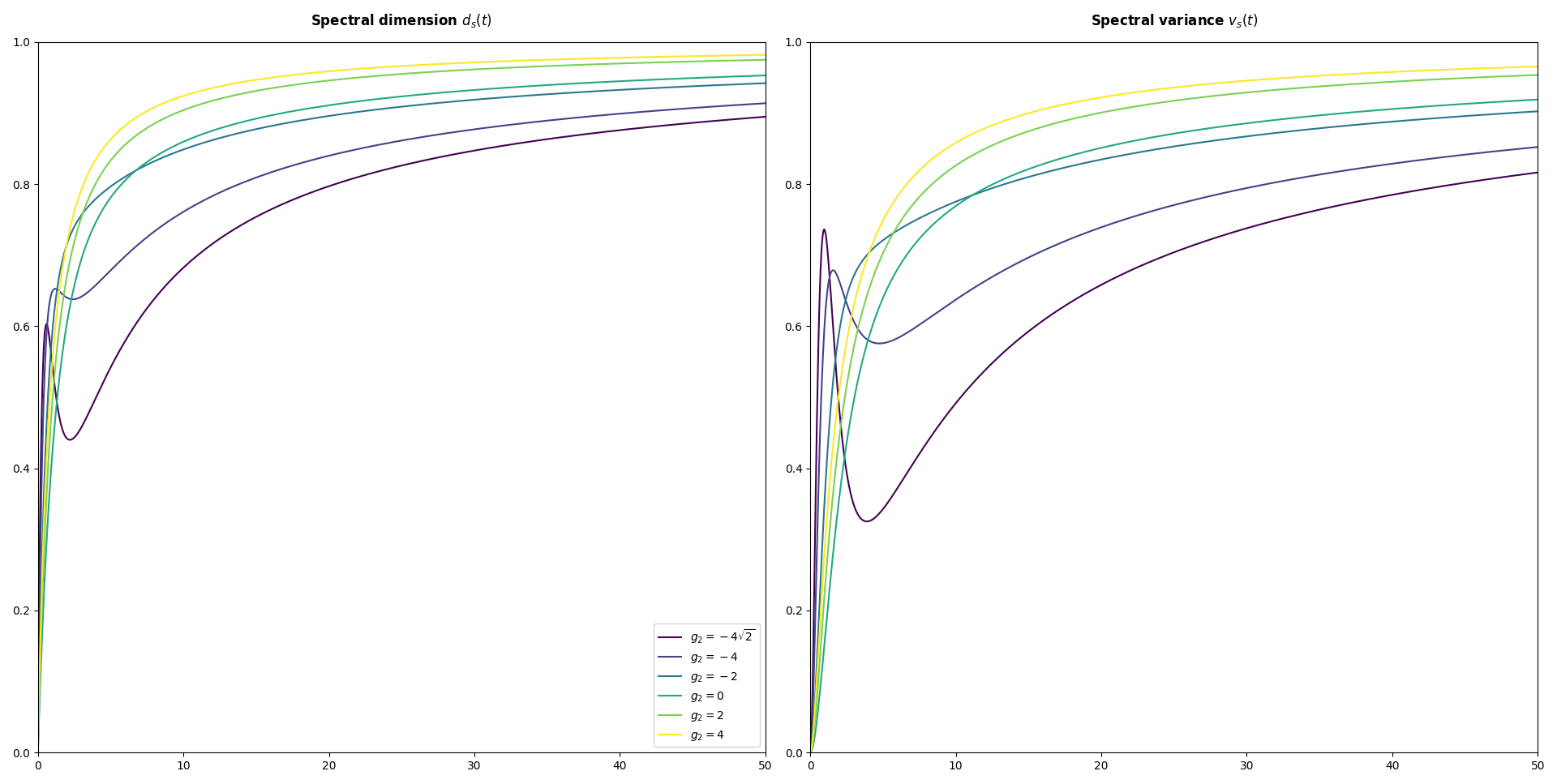}
	\caption{The spectral dimension and variance of the quartic massless $(0, 1)$-fuzzy Dirac ensemble in the one cut phase for $g_{4} = 1$ and $\beta = \beta_{2} = 2$. Darker colors correspond to lower values of $g_2$, with the lowest value at the phase transition.}
	\label{fig:quartic_onecut_estimators_massless}
\end{figure}

\begin{figure}[h!]
	\centering
	\includegraphics[width=\textwidth]{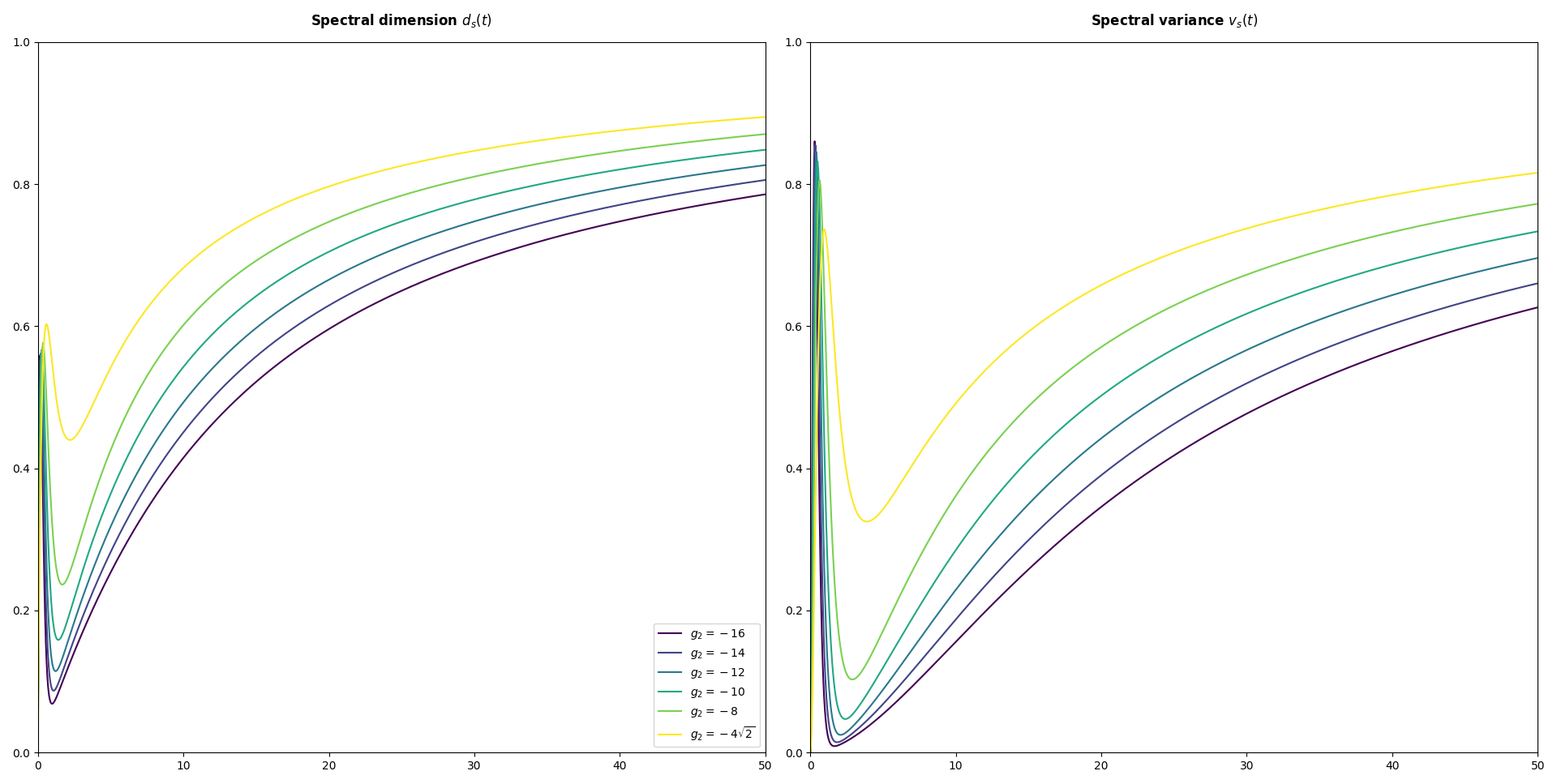}
	\caption{The spectral dimension and variance of the quartic massless $(0, 1)$-fuzzy Dirac ensemble in the two cut phase for $g_{4} = 1$ and $\beta = \beta_{2} = 2$. Darker colors correspond to lower values of $g_2$, with the highest value at the phase transition.}
	\label{fig:quartic_twocut_estimators_massless}
\end{figure}

		Let us further investigate the nature of the spectral phase transition, i.e. the transition occurs between the validity of Equation \ref{eq:one_cut_problem} and Equation \ref{eq:two_cut_problem}.
		In the fermionless model the phase transition happens, qualitatively, when the balance between the eigenvalue repulsion from the Vandermonde term
		$$
			-\frac{\beta}{4} \sum_{i \neq j} \log\left((\lambda_i - \lambda_j)^2\right)
		$$
		and the confining potential
		$$
			2g_4\left(m^2 + (\lambda_i-\lambda_j)^2\right)^2 + 2g_2\left(m^2 + (\lambda_i-\lambda_j)^2\right)
		$$
		shifts.
		
		For negative values of $g_2$ the potential has a double well, as can be seen in Figure \ref{fig:double_well_potential}.
		If this double well is deep enough compared to the strength of the repulsion, the eigenvalues will cluster in the wells leading to a spectral density with disconnected support.
		If the wells are not deep enough compared to the repulsive effect, the spectral density will be nonzero on a connected interval.
		With the addition of a massless fermion the fermionic action simply adds to the Vandermonde repulsion, changing the coupling constant of the repulsion from $\beta$ to $\beta + \beta_2$.

		\begin{figure}[h!]
			\centering
			\includegraphics[width=0.6\textwidth]{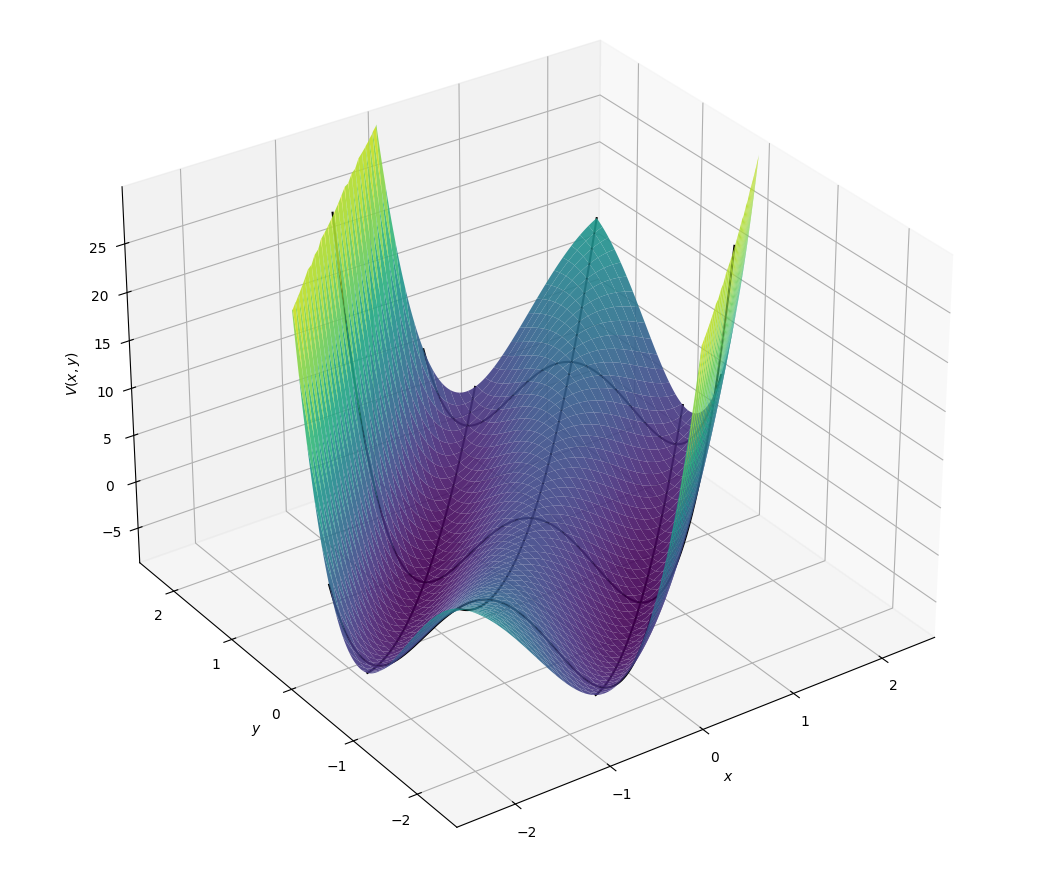}
			\caption{The polynomial part of the interaction potential of the fermionic quartic $(0, 1)$-fuzzy geometry, see Proposition \ref{prop:our_model_satisfies_assumptions}, with $g_4 = 1$, $g_2 = -4$, $m = 0$. The black lines are added for emphasis of the shape in the directions with $x-y$ and $x+y$ constant. Along lines of constant $x-y$ the potential is quadratic, along lines of constant $x+y$ the potential is quartic and for $g_2 < 0$ has a double-well profile.}
			\label{fig:double_well_potential}
		\end{figure}

		In order to see how the model changes at this phase transition we can use the formulas for $\rho_H$ to compute the moments of $H$ in terms of the coupling constants.
		In the single cut phase we find
		\[
			\mu_{2n} := \int x^{2n}\rho_{H}(x)dx = \frac{a^2}{\beta+\beta_2} \left(\frac{a}{2}\right)^{2n} \left(2g_4 a^2 C_{n+1} + (4g_4 a^2 + 24\mu_2 + 4g_2) C_{n} \right),
		\]
		where $C_{n}$ denotes the $n$-th Catalan number. In the double-cut phase we can obtain a recursive relation for the moments by expressing the relevant integrals as hypergeometric functions and using Gauss's contiguous relations (see for example \cite[Ch. 9]{zwillinger2007table}).
		\begin{align*}
			\mu_0 & = 1, \\
			\mu_2 & = -\frac{1}{8}\frac{g_2}{g_4}, \\
			\mu_{2(n+2)} & = - \frac{1}{4} \frac{2n+5}{2n+8} \frac{g_2}{g_4} \mu_{2(n+1)} - \frac{n+1}{n+4} \left(\frac{1}{64} \frac{g_2^2}{g_4^2} - \frac{1}{8} \frac{\beta + \beta_2}{g_4}\right) \mu_{2n}.
		\end{align*}
		From here the moments for the Dirac operator can be computed using the relation $\rho_D = \rho_H * \rho_H$, so that
		\[
			\mu_{2n}^D =  \sum_{j = 0}^{2n} \binom{2n}{j} (-1)^{j} \mu_{2n-j} \mu_{j}.
		\]
		
		When considering the moments as a function of the coupling constants, the spectral phase transition is second order.
		For example, we have for the second moment of $H$, $\mu_2$, at fixed $g_4$ and $\beta + \beta_2$, that
		\begin{align*}
			\lim_{g_2 \uparrow g_{2, crit}} \mu_2 & = \sqrt{\frac{\beta + \beta_2}{8g_4}} = \lim_{g_2 \downarrow g_{2, crit}} \mu_2, \\
			\lim_{g_2 \uparrow g_{2, crit}} \frac{\partial }{\partial g_2} \mu_2 & = -\frac{1}{8g_4} = \lim_{g_2 \downarrow g_{2, crit}} \frac{\partial }{\partial g_2} \mu_2, \\
			\lim_{g_2 \uparrow g_{2, crit}} \frac{\partial^2 }{\partial g_2 ^2} \mu_2 & = 0 \neq \frac{1}{64} \frac{1}{\sqrt{2(\beta+\beta_2)g_4^3}} = \lim_{g_2 \downarrow g_{2, crit}} \frac{\partial^2 }{\partial g_2 ^2} \mu_2.
		\end{align*}
		This discontinuity in the second derivative is also present when expressing higher order moments, moments of $D$ and the outer boundary of the support in terms of any of the coupling constants $g_2$, $g_4$, $\beta + \beta_2$.

	\subsection{Spectral estimators for non-zero mass}
		\label{subsec:estimators_non_zero_mass}

		\subsubsection*{Fermionic effects on the equations for $\rho$}

		The addition of a massive fermion by means of a finite spectral triple as in Equation \ref{eq:finite_spectral_triple} changes the spectral density.
		The changes can be split into three different mechanisms: a shift in coupling constants, the appearance of the integral kernel $K_{\Sigma, m}$ in the equations for $\rho_H$, and a different relation between $\rho_H$ and $\rho_D$.

		Let us start with this first effect. Expanding the quartic term in $D$ yields
		\[
			\Tr\left((D_{fuzzy} \otimes \sigma_1 + m \otimes \sigma_2)^4 \right) = 2\Tr\left(D_{fuzzy}^4\right) + 2 m^2 \Tr\left(D_{fuzzy}^2\right) + 2 m^4,
		\]
		so the quadratic coupling constant $g_2$ is shifted to $g_2 + 2m^2 g_4$.
		The constant term $2m^4$ does not depend on $D$ and therefore represents a constant shift in energy that does not affect the probability distribution.
		Similarly the quadratic term also gives an additional constant shift in energy by $2m^2g_2$.
		Since this renormalization effect quickly dominates the influence of changing $m$ we usually keep $g_2' = g_2 + 2m^2 g_4$ constant instead of $g_2$ when varying $m$.

		The second effect is the integral kernel appearing in the Fredholm integral equations \ref{subeq:one_cut_problem_rho} and \ref{subeq:two_cut_problem_rho}, which is absent in the model with no fermions.
		These equations can be re-expressed relative to the function $\sqrt{s}_+(x) := \frac{1}{i} \lim_{\epsilon \to 0} \sqrt{s}(x+i\epsilon)$. 
		Letting $\phi(x) = \frac{\rho(x)}{\sqrt{s}_+(x)}$ we obtain the equation
		\begin{equation}
			\phi(x) = p(x) + \int_{\Sigma} \widetilde{K}(x, y) \phi(y) d y
			\label{eq:equation_for_phi},
		\end{equation}
		where
		\[
			\widetilde{K}(x, y) = \frac{\beta_2}{\pi \beta}\Re\left( \frac{\sqrt{s}_+(y)}{\sqrt{s}(y+im)} \frac{1}{y-x + im}\right),
		\]
		and $p(x)$ is some polynomial determined by the coupling constants and phase of the model.
		We have been unable to solve these integral equations, but their solutions can be approximated numerically.
		For this the formulation in terms of $\phi$ is more stable since $\widetilde{K}$ is better behaved near the edges of $\Sigma$, especially for small masses.
		The effect of the integral kernel on the solutions $\phi$ and $\rho$ can be seen in Figure \ref{fig:effect_of_fermion_kernel_onecut} for $\Sigma = [-2, 2]$ and $p(x) = 1$.
		
		The limiting behaviour for $m \to \infty$ and $m \to 0$ can be explained from the limiting behaviour of $K$ and $\tilde{K}$.
		In the large mass limit, $m \to \infty$, the kernels tend to 0 and the solutions tend to the solutions from the model without fermions.
		This is of course predicated on keeping $g_2'$ constant, otherwise $g_2$ will also tend to infinity making $\rho$ tend towards a delta mass at 0.
		For the small mass limit we have
		\begin{align*}
			\lim_{m \to 0} \widetilde{K}(x, y) & = \lim_{m \to 0} \frac{\beta_2}{\pi \beta}\Re\left( \frac{\sqrt{s}_+(y)}{\sqrt{s}(y+im)} \frac{1}{y-x + im}\right), \\
			& = \frac{\beta_2}{\pi \beta} \lim_{m \to 0} \Re\left(\frac{1}{i} \frac{1}{y-x+im}\right), \\
			& = \frac{\beta_2}{\pi \beta} \lim_{m \to 0} -\frac{m}{(y-x)^2 + m^2}.
		\end{align*}
		Since $\frac{1}{\pi} \frac{m}{(y-x)^2 + m^2}$ is a nascent delta function as $m \to 0$ this gives $\widetilde{K}(x, \cdot) \to -\delta_x$ as distribution. 
		So in the $m \to 0$ limit Equation \ref{eq:equation_for_phi} becomes
		\[
			\phi(x) = p(x) - \frac{\beta_2}{\beta}\phi(x).
		\]

		\begin{figure}[h!]
			\centering
			\includegraphics[width=\textwidth]{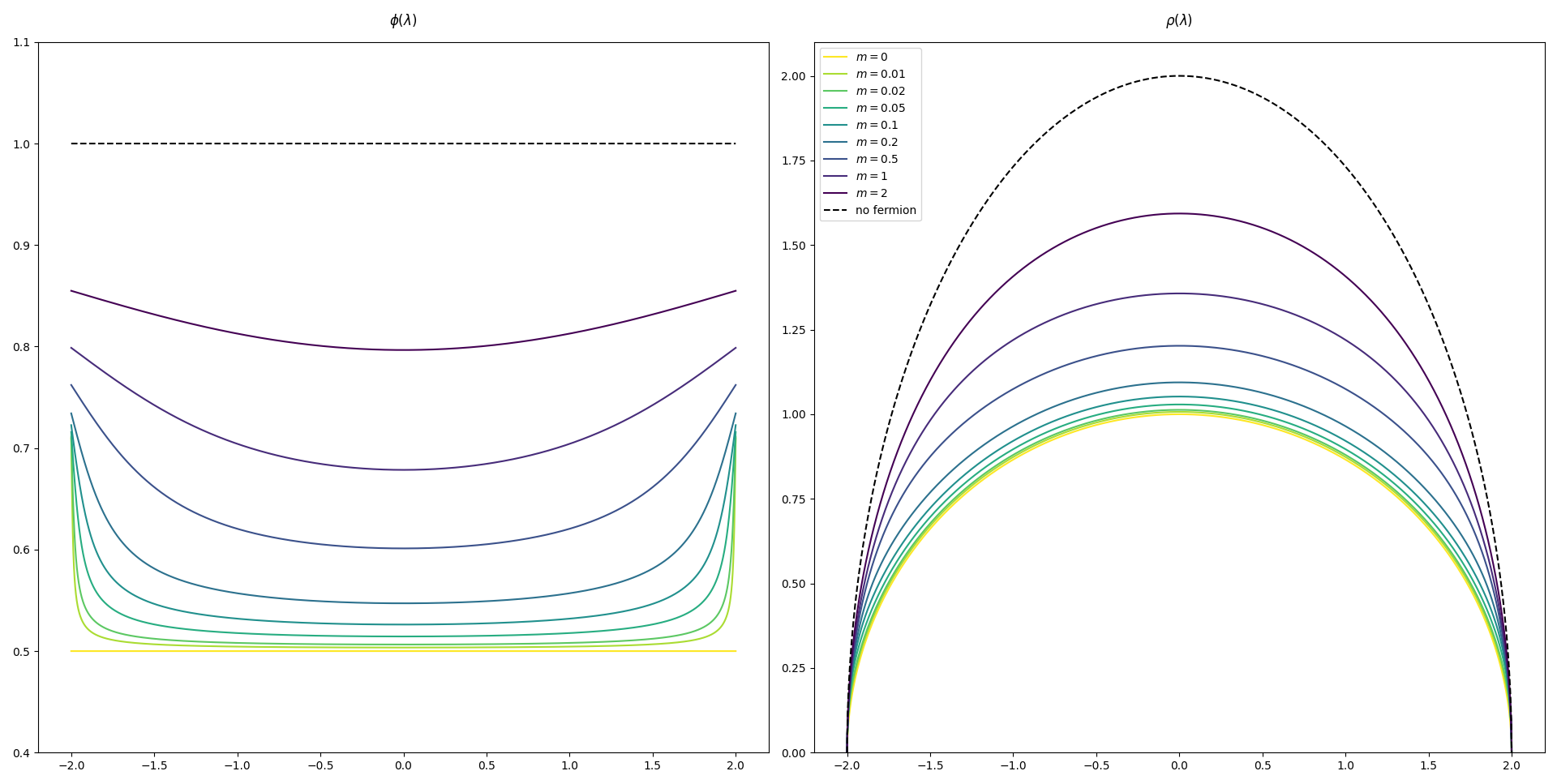}
			\caption{On the left the solutions $\phi$ of Equation \ref{eq:equation_for_phi} for $p(x)=1$ and $\Sigma = [-2, 2]$, with $\frac{\beta_2}{\beta}=1$ and various values of $m$ with lighter colors corresponding to lower mass. On the right the corresponding solutions $\rho$ to Equation \ref{subeq:one_cut_problem_rho}. Note that the solution $\rho$ to the full Equation \ref{eq:one_cut_problem} would be normalized, so the presence of the fermions affects which support $\Sigma$ gives the full solution. Observe that in the limit $m \to 0$ the solution $\phi$ converges to $\frac{1}{2}p$, while in the $m \to \infty$ limit the solution $\phi$ converges to $p$.}
			\label{fig:effect_of_fermion_kernel_onecut}
		\end{figure}

		
		Besides changing the coupling constants and the spectral density of $H$, the third effect of massive fermions is the relation between the spectrum of $H$ and that of $D = D_{fuzzy} \otimes 1 + 1 \otimes D_{finite}$, as seen between Equations \ref{eq:spectrum_of_fuzzy_D_from_H} and \ref{eq:spectrum_of_D_with_mass_from_H}.
		This shifts the eigenvalues of $D^2$ by a value of $m^2$ relative to those of the bare $D_{fuzzy}^2 \otimes 1_2$.

		We can express the spectral density for $D_{fuzzy} \otimes 1 + 1 \otimes m$ in terms of the spectral density for $D_{fuzzy}$ as follows:
		\begin{equation}
			\label{eq:mass_transformation}
			\rho_{D_{fuzzy} + m}(\lambda) = \left\{ \begin{array}{ll}
				0, & -m < \lambda < m \\
				\frac{\lambda}{\sqrt{\lambda^2 - m^2}} \rho_{D_{fuzzy}}\left(\sqrt{\lambda^2 - m^2}\right), & |\lambda| > m
			\end{array} \right.
		\end{equation}

	\subsubsection*{Effects on the phase transition}

		To understand how the presence of fermions in a fuzzy geometry affects the geometry, we first consider the phase transition.
		As discussed in Section \ref{subsec:estimators_massless}, the phase transition depends on a balance between the repulsion due to the Vandermonde and fermionic terms and the depth of the double-well potential.
		A massless fermion coincides with the Vandermonde repulsion, but for a massive fermion the energy of two eigenvalues at distance $r$ is given by $\log\left(m^2 + r^2\right)$	instead of the Vandermonde $\log\left(r^2\right)$.

		The addition of the mass in this energy has some interesting effects. 
		In Figure \ref{fig:phasetransition_and_mass} the equilibrium measures for $H$ and the Dirac operator $D$ are plotted for various values of the coupling constant $g_2$ and various masses $m$.
		For models with mass 0 the phase transition is at $g_2 = -2\sqrt{2g_4(\beta + \beta_2)}$.

		The general trend, visible most clearly in the densities for $g_2' = -5, -6$ in Figure \ref{fig:phasetransition_and_mass}, is that the addition of a fermion increases the eigenvalue repulsion, pushing the model towards the one-cut phase.
		An interesting exception can be seen in Figure \ref{fig:anomalous_force}, where as the mass increases the model moves from the one-cut phase into the two-cut phase and then back into the one-cut phase.
		This is surprising since the presence of a fermion increases the repulsive force between eigenvalues and would be expected to push the model always closer to the single cut phase.

		\begin{figure}[h!]
			\centering
			\includegraphics[width=\textwidth]{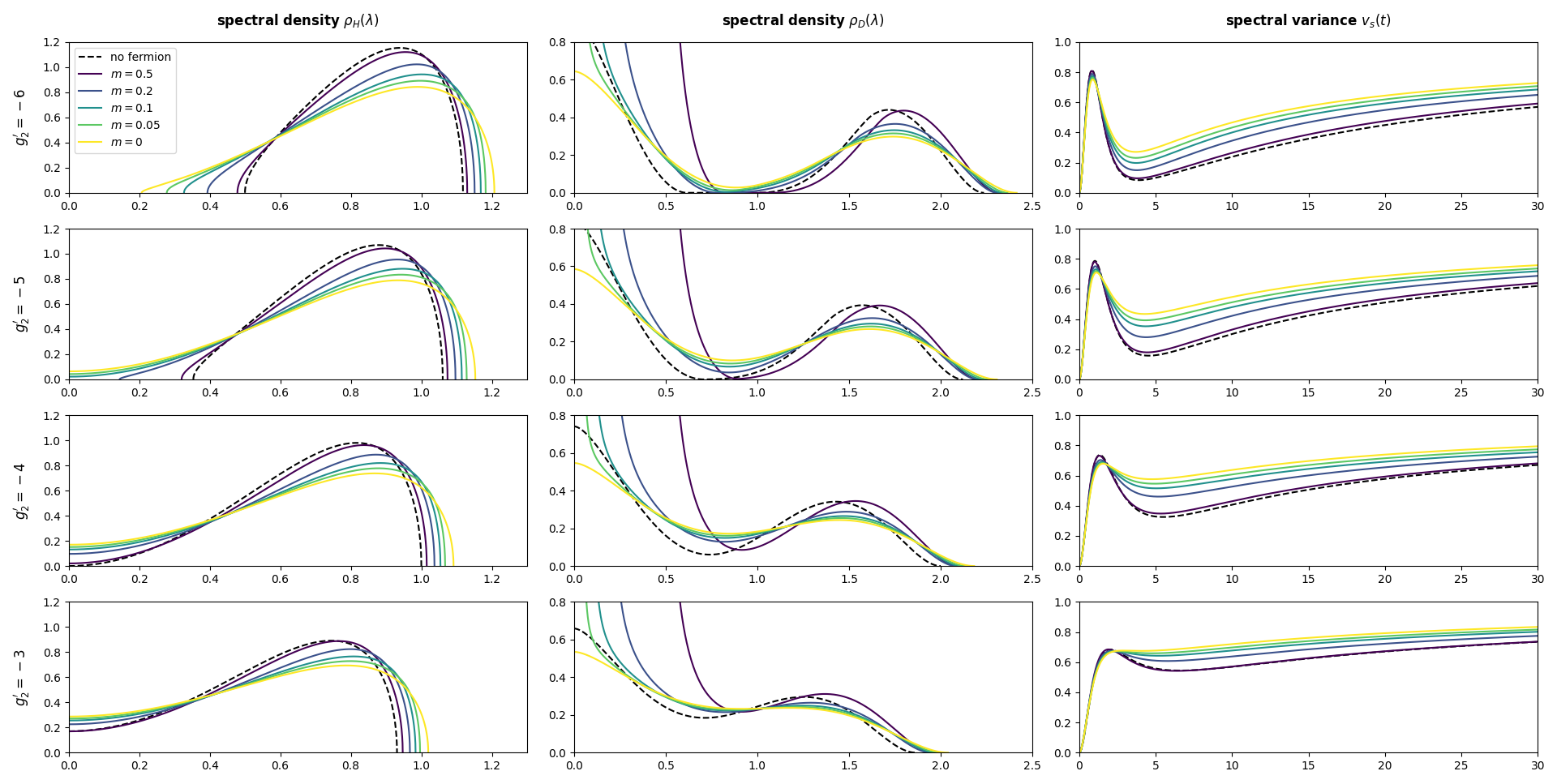}
			\caption{The spectral densities of $H$ and $D$ as well as the spectral variance for various values of the coupling constants $g_2'$ and $m$. The other coupling constants are fixed values at $g_4 = 1$, $\beta = \beta_2 = 2$. Only the range $x > 0$ is shown to allow for greater detail, the spectral densities are symmetric. The phase transition for the model without fermion ($\beta_2 = 0$) and these coupling constants lies at $g_2 = -4$; the phase transition for the $m = 0$ model lies at $g_2 = -2\sqrt{8} \approx -5.66$.
			The limit $\lim_{t \to \infty} v_s(t)$ is still 1 for all of these ensembles, but the convergence is very slow.}
			\label{fig:phasetransition_and_mass}
		\end{figure}
		
		To understand this effect consider the forces acting on a single eigenvalue in one of the wells.
		These are the forces due to the fermionless model, the Vandermonde repulsion, and the force due to the potential, plus the fermion repulsion.
		The fermion repulsion due to $\log\left(m^2 + r^2\right)$ has a maximum at $r = m$, so if the distance between the wells is approximately $m$ the force due to the eigenvalues in the other well is stronger than the force due to eigenvalues in the same well.
		This causes eigenvalues to be more repelled by the other well than their direct neighbours, pushing the model closer to the double cut phase.

		\begin{figure}[h!]
			\centering
			\includegraphics[width=\textwidth]{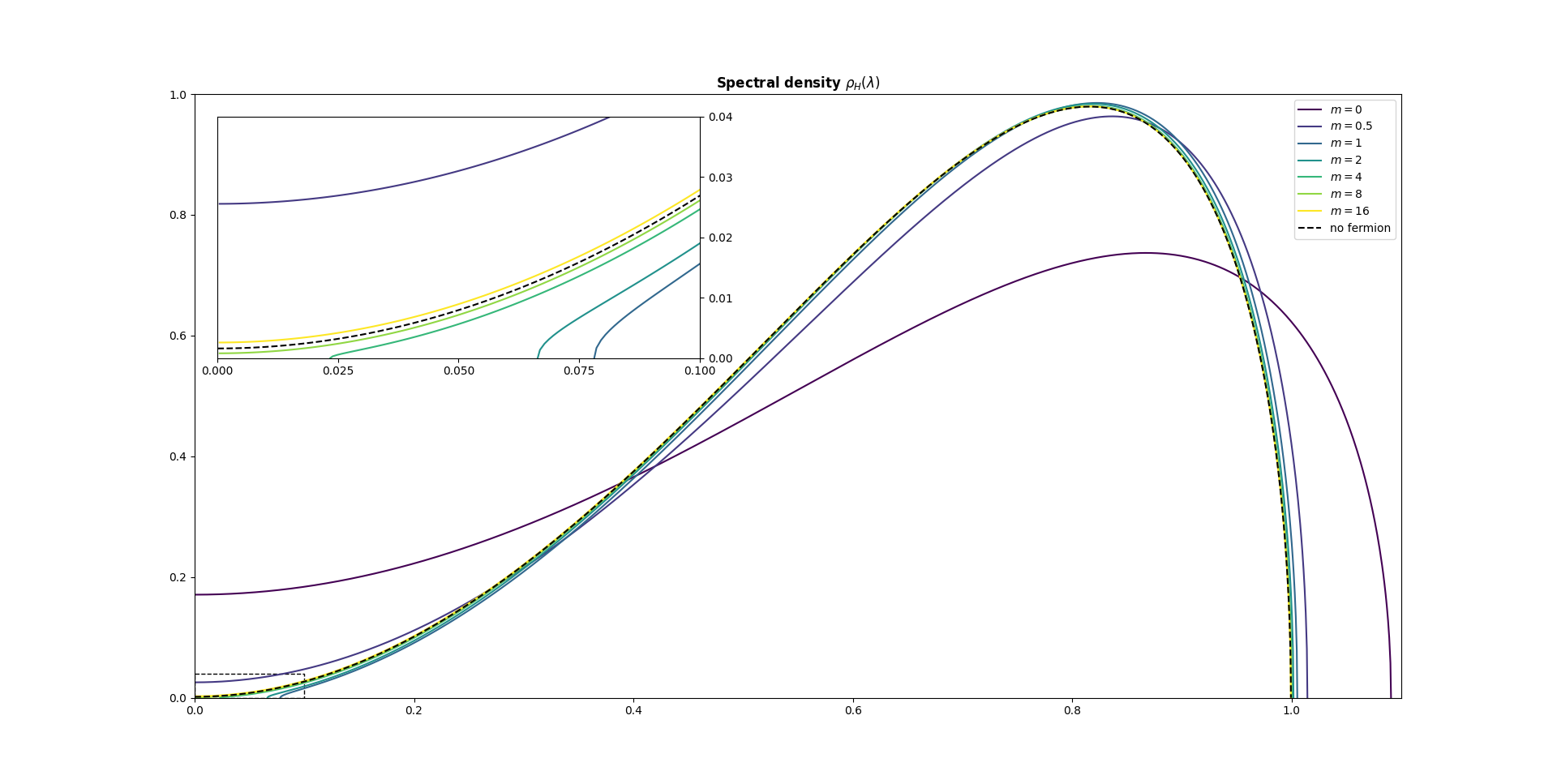}
			\caption{The positive part of the spectral density of $H$ for $g_2' = -3.99$, $g_4 = 1$, $\beta = \beta_2 = 2$. Note that for $m = 0$, $m = 0.5$, $m= 8$, and $m = 16$ the model is in the one-cut phase, while for $m = 1$, $m=2$, and $m=4$ the model is in the two-cut phase. This is due to the repulsive effect of the fermions being stronger between the two wells than within a well if the mass is approximately the separation between the wells. }
			\label{fig:anomalous_force}
		\end{figure}

	\subsubsection*{Effects on observables}

		Beside changing the equations for $\rho_H$ as observed in the previous sections, the presence of a massive fermion also affects the relation between $\rho_H$ and $\rho_D$, as described by Equation \ref{eq:mass_transformation}.

		The effect of this transformation on the spectral dimension is
		\begin{align*}
			d^m_s(t) & = 2t \frac{\int (m^2 + \lambda^2) e^{-(m^2 + \lambda^2)t} \rho_D(\lambda) d\lambda}{\int e^{-(m^2 + \lambda^2)t} \rho_D(\lambda) d\lambda}, \\
			& = 2t \left(m^2 + \frac{\int \lambda^2 e^{-\lambda^2t} \rho_D(\lambda) d\lambda}{\int e^{-\lambda^2t} \rho_D(\lambda) d\lambda}\right) = 2tm^2 + d^0_s(t)
		\end{align*}
		while the spectral variance remains unaffected
		\begin{align*}
			v^m_s(t) & = 2t^2 \left(\frac{\int (m^2 + \lambda^2)^2 e^{-(m^2 + \lambda^2)t} \rho_D(\lambda) d\lambda}{\int e^{-(m^2 + \lambda^2)t} \rho_D(\lambda) d\lambda} - \left(\frac{\int (m^2 + \lambda^2) e^{-(m^2 + \lambda^2)t} \rho_D(\lambda) d\lambda}{\int e^{-(m^2 + \lambda^2)t} \rho_D(\lambda) d\lambda}\right)^2\right), \\
			& = 2t^2 \left(m^4 + 2m^2 \frac{\int \lambda^2 e^{-\lambda^2t} \rho_D(\lambda) d\lambda}{\int e^{-\lambda^2t} \rho_D(\lambda) d\lambda} + \frac{\int \lambda^4 e^{-\lambda^2t} \rho_D(\lambda) d\lambda}{\int e^{-\lambda^2t} \rho_D(\lambda) d\lambda} - \left(m^2 + \frac{\int \lambda^2 e^{-\lambda^2t} \rho_D(\lambda) d\lambda}{\int e^{-\lambda^2t} \rho_D(\lambda) d\lambda}\right)^2\right), \\
			& = 2t^2 \left(\frac{\int \lambda^4 e^{-\lambda^2t} \rho_D(\lambda) d\lambda}{\int e^{-\lambda^2t} \rho_D(\lambda) d\lambda} - \left(\frac{\int \lambda^2 e^{-\lambda^2t} \rho_D(\lambda) d\lambda}{\int e^{-\lambda^2t} \rho_D(\lambda) d\lambda}\right)^2\right) = v^0_s(t).
		\end{align*}

		The behaviour of the spectral variance as the mass varies at various $g_2'$ can be seen in the third column of Figure \ref{fig:phasetransition_and_mass}.
		The influence of $g_2$ and $g_4$ at various fixed masses is shown in Figures \ref{fig:influence_of_g4} and \ref{fig:influence_of_g2}. 
		Finally, in Figure \ref{fig:influence_of_beta2}, the strength of the fermionic coupling constant $\beta_2$ is varied. This figure shows that, as the mass increases, the fermionic repulsion (or attraction for $\beta_2 < 0$) becomes less and less significant.

		\begin{figure}[h!]
			\centering
			\includegraphics[width=\textwidth]{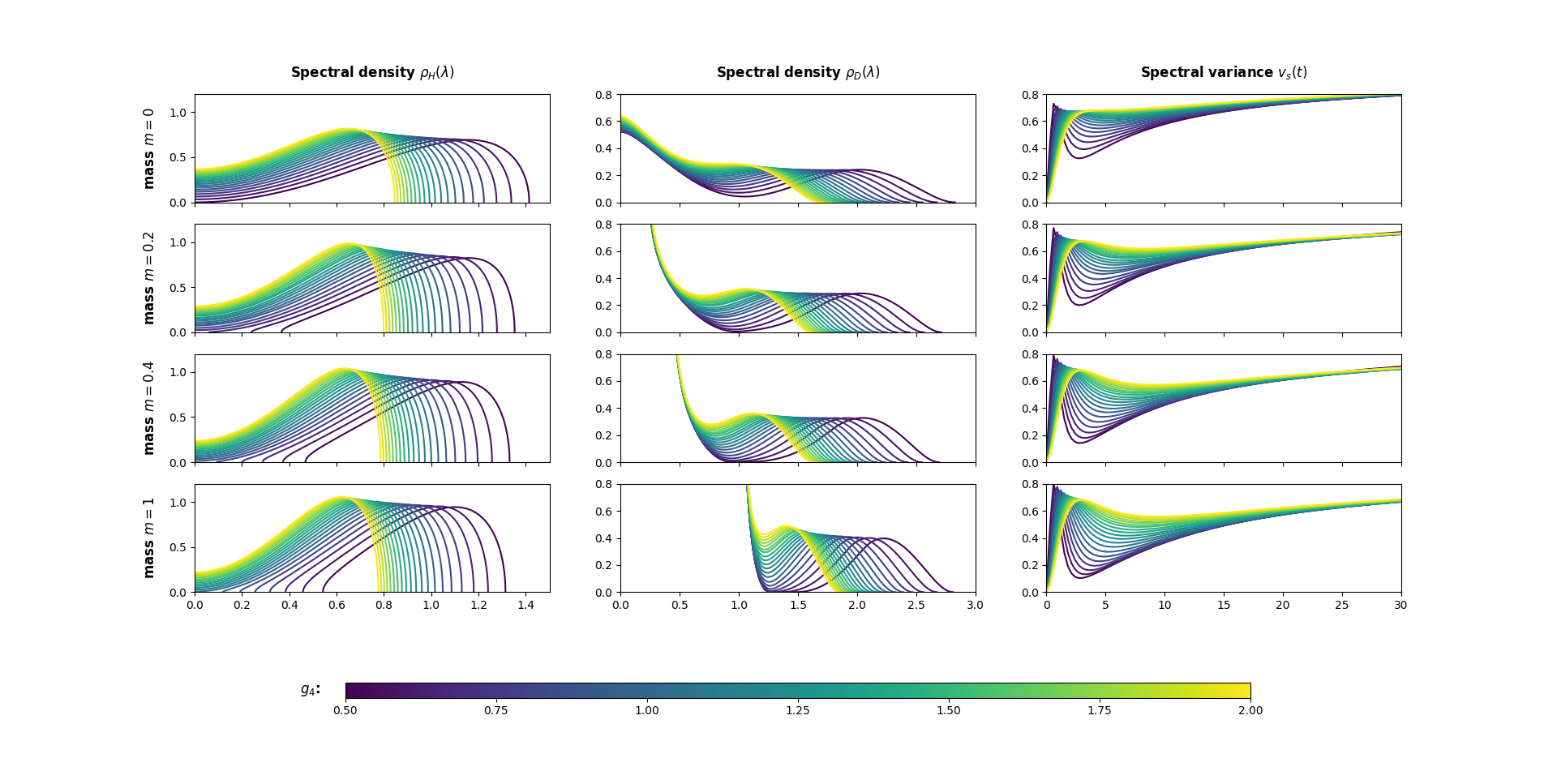}
			\caption{The spectral densities of $H$ and $D$ together with the spectral variance as functions of the coupling constant $g_4$ at various masses. The other coupling constants are kept fixed at $g_2' = -4$, $\beta = \beta_2 = 2$. Large $g_4$ (brighter color) correspond to more concentrated densities as the confining effect of $g_4 \Tr(D^4)$ increases. The model also tends to the one-cut phase for increasing $g_4$ as the double well caused by $g_2'$ becomes relatively less significant.}
			\label{fig:influence_of_g4}
		\end{figure}
		
		\begin{figure}[h]
			\centering
			\includegraphics[width=\textwidth]{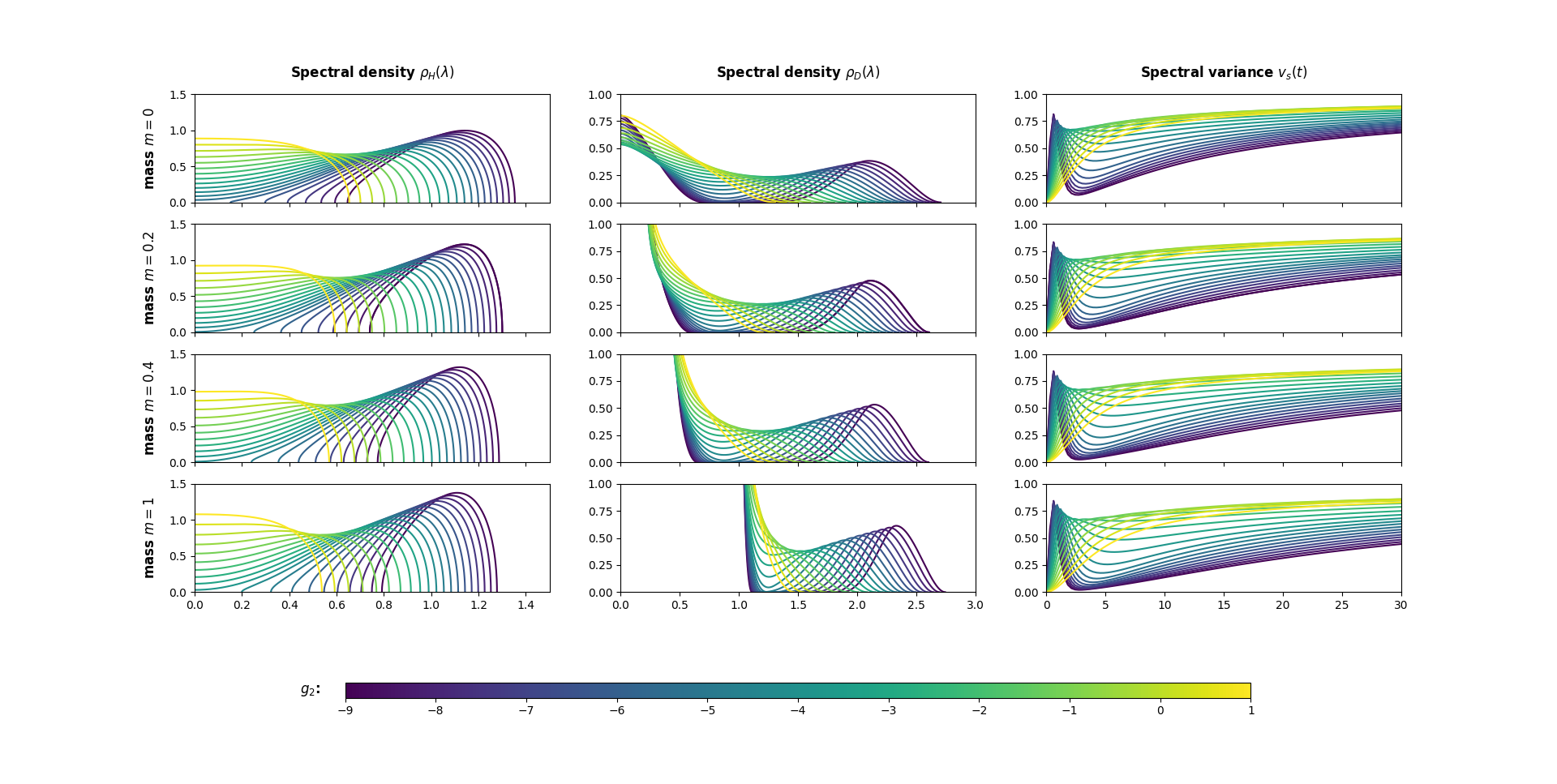}
			\caption{The spectral densities of $H$ and $D$ together with the spectral variance as functions of the coupling constant $g_2$ at various masses. The other coupling constants are kept fixed at $g_4 = 1$, $\beta = \beta_2 = 2$. Very negative values of $g_2'$ (darker color) correspond to two-cut models, as the double well becomes more pronounced. This also lengthens the gap in the Dirac density, and thus the dip in the spectral variance, as the spacing between the phases increases.}
			\label{fig:influence_of_g2}
		\end{figure}

		\begin{figure}[h!]
			\centering
			\includegraphics[width=\textwidth]{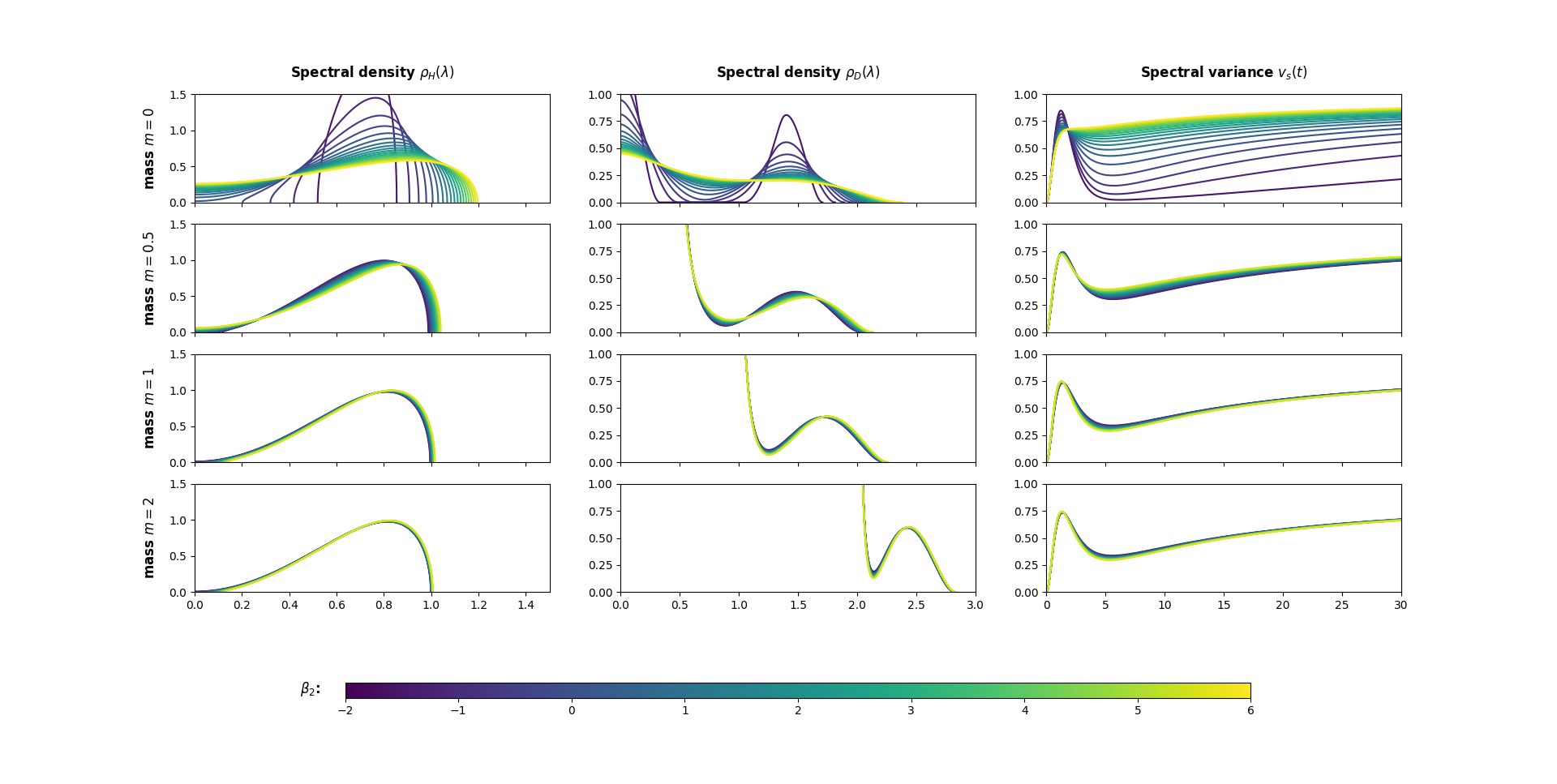}
			\caption{The spectral densities of $H$ and $D$ together with the spectral variance as functions of the fermionic coupling constant $\beta_2$ at various masses. The other coupling constants are kept fixed at $g_4 = 1$, $g_2' = -4$, $\beta = 2$. Since $\beta_2$ corresponds to the strength of the fermionic repulsion higher (brighter) values correspond to one-cut models with a flatter Dirac density. Note that as the mass increases the effect of the fermionic repulsion quickly becomes insignificant.
			Note that for $m = 0$ the case $\beta_2 = -2$ is not plotted. In that model there is no eigenvalue repulsion so that the spectral density is a Dirac mass at the minimum of the potential.}
			\label{fig:influence_of_beta2}
		\end{figure}

\section{Discussion and outlook}
\label{sec:discussion}
	
	In this paper we have restricted ourselves to models with a single fermion.
	Having multiple fermions is certainly possible and does not require new methods than those presented, and they do not significantly change the behaviour of the model.
	When adding multiple fermions only the eigenvalues of the corresponding mass matrix, i.e. the Dirac operator of the finite spectral triple representing fermion space, matter.
	Letting $m_1, \ldots, m_n$ be these eigenvalues, the fermionic action will be
	\[
		\frac{\beta_2}{4} \sum_{i=1}^n \sum_{j, k = 1}^N \log\left(m_i^2 + (\lambda_j-\lambda_k)^2\right),
	\]
	and the integral kernel $K_{\Sigma, m}$ in Equations \ref{eq:one_cut_problem}and \ref{eq:two_cut_problem} will be replaced by $K_{\Sigma, m_1} + \ldots + K_{\Sigma, m_n}$, where $m_1, \ldots, m_n$ are the eigenvalues of the mass matrix.
	This does not produce qualitatively different behaviour than can already be observed in the single fermion case, so for simplicity we have used a single fermion throughout.

	Adding multiple fermions is likely to be interesting in possible extensions of these models that include a random mass matrix or a different fermionic action. Additionally, adding a number of massless fermions generates examples of $\beta$-ensembles with arbitrarily high integer $\beta$.
	The dimension of the fermion space multiplies the coupling constants of the action since
	\[
		\Tr\left(D_{\text{geom}} \otimes 1_{\text{fermion}}\right) = \dim\left(H_{\text{fermion}}\middle) \cdot \Tr\middle(D_{\text{geom}}\right).
	\]
	So an $n$-dimensional fermion space with $D_{\text{fermion}} = 0$ corresponds to a Coulomb gas model at inverse temperature $\beta = n$.
		
	We have also chosen to investigate our Dirac ensemble through the Coulomb gas method; this is certainly not the only available method.
	Recall that a formal matrix integral is a formal power series constructed by expanding the non-Gaussian part of the action in a matrix integral
	\[
		Z = \int_{\mathcal{H}_N} e^{-\frac{1}{2}\Tr(H^2) + t\Tr(V(H))} dH
	\]
	in a power series in $t$, and interchanging the order of summation and integration.
	If $V$ is polynomial these formal matrix integrals have graphical interpretations \cite{eynardRandomMatrices}.
	The matrix ensemble for the fermionic $(0, 1)$-fuzzy Dirac ensemble has a corresponding formal matrix integral, in $g_4$ and $\beta_2$.
	It would be interesting to find the graphical interpretation of the fermionic ensemble and compare it to the purely fuzzy ensembles examined in \cite{khalkhali2022spectral}, especially since these graphical tools are also available for multi-matrix ensembles.

	Additionally, the large $N$ behaviour of observables of such matrix ensembles often coincides with that of their convergent counterparts \cite{guionnet2005combinatorial,figalli2016universality}.
	For many Dirac ensembles this can be proved \cite{hessam2022bootstrapping,hessam2022double}. 
	In particular, this correspondence was recently used to find the large $N$ limits of the moments and the partition function for two-matrix Dirac ensembles in \cite{khalkhali2023coloured}.
	Such models can further be viewed as sums over random combinatorial maps; these random maps can also be assigned notions of dimension \cite{gwynne2021random}.
	Since these random maps provide a discretization of Liouville Quantum Gravity, comparing the spectral dimension of fuzzy Dirac ensembles to those of the corresponding random maps can shed further light on a possible duality between discretization and fuzzification of geometries.

\bibliographystyle{abbrv}

\end{document}